\newcommand{\m}[1]{\ensuremath{#1}\xspace}
\newcommand{\trval}[1]{\m{\mbox{\bf #1}}}
\newtheorem{definition}{Definition}
\newtheorem{example}{Example}
\newtheorem{lemma}{Lemma}
\newtheorem{theorem}{Theorem}
\definecolor{lavander}{cmyk}{0,0.48,0,0}
\definecolor{violet}{cmyk}{0.79,0.88,0,0}
\definecolor{burntorange}{cmyk}{0,0.52,1,0}
\def\lav{black!90}
\def\oran{black!30}
\tikzstyle{peers}=[draw,circle,violet,bottom color=\lav,
\tikzstyle{superpeers}=[draw,circle,burntorange, left color=\oran,
\tikzstyle{legendsp}=[rectangle, draw, burntorange, rounded corners,
\tikzstyle{legendp}=[rectangle, draw, violet, rounded corners, thin,
\tikzstyle{legend_general}=[rectangle, rounded corners, thin,
	\newcommand{\lrule}{\m{\leftarrow}}
	\newcommand{\cause}{\m{\stackrel{c}{\lrule}}}
	\newcommand{\ltrue}{\trval{t}}
	\newcommand{\lfalse}{\trval{f}}
	\newcommand{\lunkn}{\trval{u}}
	\newcommand{\Tr}{\ltrue}
	\newcommand{\Fa}{\lfalse}
	\newcommand{\Un}{\lunkn}
	\newcommand{\struct}{\m{I}}
	\newcommand{\I}{\m{\mathcal{I}}}
	\NewDocumentCommand\inter{g+g}{%
	  \IfNoValueTF{#1}
	    {\struct}
	    {\m{#1^{#2}}}}
	\newcommand{\ttt}{\m{\overline{t}}}
	\renewcommand{\int}{\m{\mathbb{Z}}}
	\NewDocumentCommand\subs{g+g}{%
	  \IfNoValueTF{#1}
	    {\m{/}}
	    {\m{#1/ #2}}}
	\newcommand{\logicname}[1]{\text{\sc #1}\xspace}
	\newcommand{\idp}{\logicname{IDP}}
	\newcommand{\fodot}{\logicname{FO(\ensuremath{\cdot})}}
	\newcommand{\foid}{\logicname{FO(\ensuremath{ID})}}
	\newcommand{\fo}{\FO}
\newcommand{\ouracronym}[3]{%
	\newacronym{#1}{#2}{#3}
	\expandafter\newcommand\csname #1\endcsname{\gls{#1}\xspace}%
}
	\def\ifenv#1{
	\def\@tempa{#1}%
	\def\@ttempa{#1*}%
	\ifx\@tempa\@currenvir
	\expandafter\@firstoftwo
	\else
	\expandafter\@secondoftwo
	\fi
	}
	\newcommand{\ddrule}[4]{\ensuremath{#1 \leftarrow #2 & \{#3\} & #4}}
	\newcommand{\drule}[2]{\ensuremath{#1 & \leftarrow & #2}}
	\newcommand{\darule}[4]{\ensuremath{#1 \leftarrow #2 & \{#3\} & #4}}
	\newcommand{\arule}[2]{\ensuremath{#1 \, &\leftarrow \, #2}}
	\newcommand{\LNDRule}[2]{
	\ifenv{array}
	{\drule{#1}{#2}}
	{ \ifenv{align}
		{\arule{#1}{#2}}
		{\ifenv{align*}
		{\arule{#1}{#2}}
		{ERROR: using LDRule in unsupported environment: \@currenvir}
		}
	}
	}
	\newcommand{\LDRule}[4]{
	\ifenv{array}
	{\ddrule{#1}{#2}{#3}{#4}}
	{ \ifenv{align}
		{\darule{#1}{#2}{#3}{#4}}
		{\ifenv{align*}
		{\darule{#1}{#2}{#3}{#4}}
		{ERROR: using LDRule in unsupported environment: \@currenvir}
		}
	}
	}
	\NewDocumentCommand\LRule{m+g+g+g}{%
		\IfNoValueTF{#2}%
		{#1.&}{%
		\IfNoValueTF{#3}
		{\LNDRule{#1}{#2}}
		{\LDRule{#1}{#2}{#3}{#4}}%
		}
	}
	\NewDocumentCommand\CLRule{m+g}{%
	\ifenv{array}
	{\cdrule{#1}{#2}}
	{ \ifenv{align}
		{\carule{#1}{#2}}
		{\ifenv{align*}
			{\carule{#1}{#2}}
			{ERROR: using CLRule in unsupported environment: \@currenvir}
		}
	}
	}
	\NewDocumentCommand\carule{m+g}{%
		\IfNoValueTF{#2}
			{\ensuremath{#1.}}
			{\ensuremath{#1 \, &\cause \, #2}}}
	\NewDocumentCommand\cdrule{m+g}{%
		\IfNoValueTF{#2}
			{\ensuremath{#1.}}
			{\ensuremath{#1 & \cause & #2}}}
	\newcommand{\algrule}[4]{
	\hbox{{#1}:}&
	\quad #2 ~\longrightarrow~ #3
	\hbox{~ if } #4\\
	}
	\newcommand{\AlgoRule}[4]{
	\ifenv{array}
	{\algrule{#1}{#2}{#3}{#4}}
		{ERROR: using AlgoRule in unsupported environment: \@currenvir}
	}
\newcommand{\commentstyle}{\color{Gray}}
	\lstdefinelanguage{idp}{
		morekeywords=[1]{namespace,vocabulary,theory,structure,procedure,term,set,formula, spec, specification},
		morekeywords=[2]{include,using,type,isa,contains,partial,extern,LFD,GFD,constructed,from,constraint,func,pred,supertype,of,subtype,define},
		morekeywords=[3]{int,float,char,string,nat},
		morekeywords=[4]{if,then,else,for,end},
		morecomment=[s]{/*}{*/},	
		morecomment=[l]{//}
	}
	\newcommand{\ignore}[1]{}
	\newcommand{\namedcomment}[3]{
		\ifthenelse{\boolean{nocomments}}
		{} %IF no comments, write nothing
		{ %Otherwise
			\ifthenelse{\boolean{commentmargin}}
				{ {\color{#3} \marginpar{\color{#3}\sc #2}#1}  } %Name in margin
				{  {\color{#3} {\sc #2}: #1}  } %Name not in margin
		}
	}
	\newcommand{\mnamedcomment}[3]{\ifthenelse{\boolean{nocomments}}{}{{\marginpar{ \color{#3}{\sc #2}:#1}}}}
\newcommand{\keyword}[2]{%
	\expandafter\newcommand\csname #1\endcsname{#2\xspace}%
	\expandafter\newcommand\csname #1s\endcsname{#2s\xspace}%
	\expandafter\newcommand\csname #1ness\endcsname{#2ness\xspace}%
% 	\expandafter\newcommand\MakeUppercase{\csname #1\endcsname}{#2\xspace}%
%	\expandafter\newcommand\csname\makefirstuc{#1}\endcsname{\makefirstuc{#2}\xspace}%
%	\expandafter\newcommand\csname\makefirstuc{#1}\endcsname{\makefirstuc{#2}s\xspace}%
}
\newcommand{\mc}[1]{\mathcal{#1}}
\newcommand{\T}{\m{\mathbb{T}}} %theory
\renewcommand{\I}{I}% Structure
\newcommand{\Q}{\m{\alpha}} %query
\newcommand{\A}{\mc{A}} % Agents
\newcommand{\pws}{\m{Q}} % Possible World Structure
\newcommand{\upws}{\m{\mc{Q}}} % distributed Possible World Structure
\newcommand{\ubp}{\m{\mc{B}}} % distributed Belief Pair
\newcommand{\leqpws}{\leq_K}
\newcommand{\lequpws}{\leq_K}
\newcommand{\ubprevision}{\m{\mc{D}^*_{\T}}}
\newcommand{\ubprevisionL}{\m{\mc{D}^c_{\T}}}
\newcommand{\ubprevisionU}{\m{\mc{D}^l_{\T}}}
\newcommand{\says}{\,\m{\mathit{says}}\,}
\newcommand{\say}{\m{\mathit{says}}}
\newcommand{\access}{\m{\mathit{access}}}
\newcommand{\DACL}{dAEL(ID)\xspace}
\newcommand{\defs}{\textit{Def}}
\newcommand{\pars}{\textit{Par}}
\newcommand{\wfm}{\textit{wfm}}
\newcommand{\glb}{\textit{glb}}
\newcommand{\LL}{\mathbb{L}}
\renewcommand{\phi}{\varphi}
\begin{document}
%
% paper title
% Titles are generally capitalized except for words such as a, an, and, as,
% at, but, by, for, in, nor, of, on, or, the, to and up, which are usually
% not capitalized unless they are the first or last word of the title.
% Linebreaks \\ can be used within to get better formatting as desired.
% Do not put math or special symbols in the title.
\title{A Query-Driven Decision Procedure for Distributed Autoepistemic Logic with Inductive Definitions}

% author names and affiliations
% use a multiple column layout for up to three different
% affiliations
\author{\IEEEauthorblockN{Diego Agustin Ambrossio}
\IEEEauthorblockA{University of Luxembourg\\
2, avenue de l'Université\\
L-4365 Esch-sur-Alzette\\
Luxembourg\\
diego.ambrossio@uni.lu}
\and
\IEEEauthorblockN{Marcos Cramer}
\IEEEauthorblockA{University of Luxembourg\\
2, avenue de l'Université\\
L-4365 Esch-sur-Alzette\\
Luxembourg\\
marcos.cramer@uni.lu}}

% conference papers do not typically use \thanks and this command
% is locked out in conference mode. If really needed, such as for
% the acknowledgment of grants, issue a \IEEEoverridecommandlockouts
% after \documentclass

% for over three affiliations, or if they all won't fit within the width
% of the page, use this alternative format:
% 
%\author{\IEEEauthorblockN{Michael Shell\IEEEauthorrefmark{1},
%Homer Simpson\IEEEauthorrefmark{2},
%James Kirk\IEEEauthorrefmark{3}, 
%Montgomery Scott\IEEEauthorrefmark{3} and
%Eldon Tyrell\IEEEauthorrefmark{4}}
%\IEEEauthorblockA{\IEEEauthorrefmark{1}School of Electrical and Computer Engineering\\
%Georgia Institute of Technology,
%Atlanta, Georgia 30332--0250\\ Email: see http://www.michaelshell.org/contact.html}
%\IEEEauthorblockA{\IEEEauthorrefmark{2}Twentieth Century Fox, Springfield, USA\\
%Email: homer@thesimpsons.com}
%\IEEEauthorblockA{\IEEEauthorrefmark{3}Starfleet Academy, San Francisco, California 96678-2391\\
%Telephone: (800) 555--1212, Fax: (888) 555--1212}
%\IEEEauthorblockA{\IEEEauthorrefmark{4}Tyrell Inc., 123 Replicant Street, Los Angeles, California 90210--4321}}

% use for special paper notices
%\IEEEspecialpapernotice{(Invited Paper)}

% make the title area
\maketitle

% As a general rule, do not put math, special symbols or citations
% in the abstract
\begin{abstract}
Distributed Autoepistemic Logic with Inductive Definitions (dAEL(ID)) is a recently proposed non-monotonic logic for says-based access control. We define a query-driven decision procedure for dAEL(ID) that is implemented in the knowledge-base system IDP. The decision procedure is designed in such a way that it allows one to determine access rights while avoiding redundant information flow between principals in order to enhance security and reduce privacy concerns. Given that the decision procedure has in the worst case an exponential runtime, it is to be regarded as a proof of concept that increases our understanding of dAEL(ID), rather than being deployed for an access control system.
\end{abstract}

\begin{IEEEkeywords}
access control,
says-based logic,
decision procedure,
non-monotonic logic,
autoepistemic logic,
well-founded semantics,
inductive definitions,
IDP
\end{IEEEkeywords}

\section{Introduction}

Multiple logics have been proposed for distributed access control \cite{Abadi03,Gurevich07,Abadi08,Garg12,Genovese12}, most of which use a modality $k \says$ indexed by a principal (i.e.\ user or process) $k$.
These \say-based access control logics are designed for systems in which different principals can issue statements that become part of the access control policy.
$k \says \varphi$ is usually rendered as ``$k$ supports $\varphi$'', which can be interpreted to mean that $k$ has issued statements that -- together with some additional information present in the system -- imply $\varphi$.
Different access control logics vary in their account of which additional information may be assumed in deriving the statements that $k$ supports.

Van Hertum et al.\ \cite{ijcai16Cramer} have recently proposed a multi-agent variant of autoepistemic logic, called \emph{Distributed Autoepistemic Logic with Inductive Definitions} (\DACL), to be used as a \say-based access control logic. Autoepistemic logic is a non-monotonic logic originally designed for reasoning about knowledge bases and motivated by the principle that an agent's knowledge base completely characterizes what the agent knows \cite{mo85}. By applying the semantic principles of autoepistemic logic to characterize the \say-modality, \DACL allows us to derive a statement of the form $\neg k \says \phi$ on the basis of the observation that $k$ has not issued statements implying $\phi$. As explained in Section \ref{sec:motivation}, supporting reasoning about such negated \say-statements allows \DACL to model access denials straightforwardly.

Van Hertum et al.\ have extended multiple semantics of autoepistemic logic to \DACL, but have argued that the well-founded semantics is to be prefered in the application of \DACL to access control. In this paper we therefore restrict ourselves to the well-founded semantics of \DACL.

When applying \DACL to access control, the access control policy consists of a separate set of \DACL formulas for each principal in the system, where the set of formulas of each principal consists of the statements issued by that principal. A principal $k$ has access right to a resource $r$ if and only if the owner $j$ of that resource supports the formula $\textit{access}(k,r)$, i.e.\ iff the \DACL formula $j \says \textit{access}(k,r)$ is true in the well-founded model of the access control policy.

We define a query-driven decision procedure for \DACL, which -- under the assumption of a finite domain -- allows  one to determine the truth value of a formula in the well-founded model of a \DACL access control policy, i.e.\ to determine access rights. This decision procedure is designed in such a way that it avoids redundant information flow between principals, which ensures that the need-to-know principle of computer security \cite{Sandhu94} is not violated, and which additionally reduces privacy concerns. This decision procedure is implemented with the help of the \idp system \cite{WarrenBook/DeCatBBD14}, a knowledge base system for the language of first-order logic with inductive definitions.

The decision procedure that we define has in the worst case an exponential runtime. This means that it is not practicable to build an access control system that implements this decision procedure without including heuristics to optimize response time and a principled approach for dealing with situations when access cannot be determined within a reasonable amount of time (see Section VII of Cramer et al.\ \cite{NP-complete} for an example of such an approach in a somewhat different access-control setting). For this reason, we regard the contribution of this paper to be mainly conceptual: The defined decision procedure is a proof of concept that increases our understanding of dAEL(ID) by providing an algorithmic characterization of the well-founded semantics of dAEL(ID). This algorithmic characterization complements in a conceptually fruitful way the semantic definition from Van Hertum et al.\ \cite{ijcai16Cramer} which is based on a fixpoint construction on abstract structures.

The rest of the paper is organized as follows. %Section \ref{sec:preliminary} gives some preliminary motivation for the design choices of \DACL.
In Section \ref{sec:dacl}, we define \DACL and motivate its application to access control.
% Section \ref{sec:use-cases} presents some application scenarios of \DACL and further motivates certain design choices through these scenarios.
In Section \ref{sec:IDP}, we introduce the \idp system and its language FO(ID). %takes advantage off.
In Section \ref{sec:query}, we present a query mechanism for determining access rights while avoiding redundant information flow between principals. Section \ref{sec:related} discusses related work. Section \ref{sec:future} concludes the paper and presents possible future work.

\section{Distributed Autoepistemic Logic with Inductive Definitions}
\label{sec:dacl}

Van Hertum et al.\ \cite{ijcai16Cramer} have used two notational variants of \DACL: In the first one, the modality of the logic is written as $K_A \phi$, following the standard notation in autoepistemic logic. In the second one, it is written as $A \says \phi$, following the standard notation in access control logic. In this paper, we only use the notation $A \says \phi$.

% In this section, we describe the syntax and semantics of \DACL. %Theories in this logic describe the policy specifications of the set of agents.  We consider a distributed access control model, where policy information is scattered among several principals. Throughout the rest of the paper, we assume a fixed set $\A$ of principals.

% \subsection{Definition}

%In this section we present the logic \DACL which allows to specify and reason about distributed access control policies. We consider a distributed access control model, where policy information is scattered among several principals. Principals can support policy statements and also rely on statements from other principals to make their assertions.

%The \emph{\says} modality represents the remote information contained in different theories belonging to other principals.

% \DACL is an access control policy specification logic that allows each principal to write their own theory describing access control policies. The specification language contains a modality \emph{\says} indexed by a principal $A$ and supports inductive definitions. Intuitively the meaning of $A ~\says~ \varphi$ is ``$A$ supports $\varphi$''.

% The importance of inductive definitions is fully motivated in the following Section.

\subsection{\DACL Syntax}
We assume that a set $\A$ of principals and a first-order vocabulary $\Sigma$ consisting of function and predicate symbol with fixed arity is fixed throughout this paper. As usual, $0$-ary function symbols play the role of constants, and $0$-ary predicate symbols play the role of propositional variables. Terms are built from function symbols and variables in the usual manner.

\begin{definition}  \DACL formulas are defined by the following EBNF rule, where $P$ denotes a predicate symbol, $t$ a term and $x$ a variable:
 $$\varphi ::= P(t,\dots,t) \mid t = t \mid \neg \varphi \mid (\varphi \land \varphi) \mid \forall x \; \varphi \mid t ~\says~ \varphi $$
% \[\delta ::= p \leftarrow \varphi \mid \delta, p \leftarrow \varphi \]
% \[ T_A ::= \varphi \mid \{ \delta \} \mid T_A, \varphi \mid T_A, \{\delta \} \]
\end{definition}

The symbols $\lor$, $\Rightarrow$, $\Leftrightarrow$ and $\exists$ are treated as abbreviations in the standard way. We follow the standard conventions for dropping brackets when this does not cause unclarity.

The intuitive reading of $t \says \phi$ is ``$t$ is a principal and $t$ supports $\phi$''. So if the term $t$ does not denote a principal, $t \says \phi$ will be interpreted to be false.

\begin{definition} A \emph{\say-atom} or \emph{modal atom} is a formula of the form $t \says \varphi$. A \emph{\say-literal} is a \say-atom $t \says \varphi$ or its negation $\neg t \says \varphi$.
\end{definition}

As motivated in Section \ref{sec:motivation} below, \DACL contains a construct for inductive definitions:

\begin{definition}
We define a \DACL \emph{inductive definition} $\Delta$ to be a finite set of rules of the form $\forall \overline{x}:P(\overline{x}) \leftarrow \varphi(\overline{y})$, where $\overline{y} \subset \overline{x}$ and $\varphi(\overline{y})$ is a \DACL formula.  $P(\overline{x})$ is called the \emph{head} and $\varphi(\overline{y})$ the \emph{body} of the rule $\forall \overline{x}:P(\overline{x}) \leftarrow \varphi(\overline{y})$.
\end{definition}

\begin{definition}
A \DACL \emph{theory} $T$ is a set that consists of \DACL formulas and \DACL inductive definitions.
\end{definition}

In a distributed setting, different principals can issue statements that become part of the access control policy. A \DACL theory as defined above only represent the set of statements of the access control policy issued by a single principal. In order to represent the full access control policy, we use the notion of a \emph{distributed theory}:
%different principals each have their own theory that describes the policies for the resources under his control as well as common policies general to the system:

\begin{definition}
 A \emph{distributed theory} $\T$ is an indexed family $(\T_A)_{A \in \A}$, where each $\T_A$ is a \DACL theory.
\end{definition}

%This special says-predicate take as arguments the name (or address) of another principal\footnote{For simplicity we use the IP address of the nodes, representing the principals, and an specific port to establish the communication.} and a $\varphi$-term representing the formula we want to remotely query to the principal.
\subsection{Semantics}
\label{sec:semantics}

Van Hertum et al.\ \cite{ijcai16Cramer} have defined various semantics for \DACL using Approximation Fixpoint Theory \cite{aaai/DMT98}, but have argued for the use of the well-founded semantics in the application of \DACL to access control. In this paper, we define a decision procedure for \DACL with respect to the well-founded semantics, so we only define this semantics. The definition of the semantics involves a lot of technical machinery, but for a reader familiar with autoepistemic logic, it is enough to know that the well-founded semantics of \DACL is an extension of the well-founded semantics of autoepistemic logic \cite{DeneckerMT03} to the multi-agent case under the assumption of mutual positive and negative introspection between the agents. We motivate this mutual introspection below in Section \ref{sec:motivation}. Note that the well-founded semantics of \DACL is defined over a fixed domain $D$, which can be either finite or infinite (but for the decision procedure in Section \ref{sec:query}, $D$ is assumed to be finite).

For defining the well-founded semantics of \DACL, we use the methodology of \emph{Approximation Fixpoint Theory} that Denecker et al.\ \cite{aaai/DMT98} used to define the well-founded semantics of autoepistemic logic.
%\footnote{Approximation fixpoint theory makes it possible to define various logic programming semantics (stable semantics, partial stable semantics, Kripke-Kleene semantics, well-founded semantics and others) and corresponding semantics of autoepistemic logic in a unified way.} 
This methodology is based on the idea of approximating the knowledge of an agent using a three-valued valuation, in which formulas, inductive definitions and theories may not only be true or false but also undefined. The logical connectives combine these three truth values based on Kleene's truth tables \cite{Kleene38}. %In these three-valued valuations, formulas, inductive definitions and theories are evaluated with respect to a \emph{distributed belief pair} and a structure.

We use truth values $\ltrue$ for truth, $\lfalse$ for falsity, and $\lunkn$ for undefined.
The truth order $<_t$ on truth values is induced
by $\lfalse<_t\lunkn<_t\ltrue$. The precision order $<_p$ on truth values is
induced by $\lunkn<_p\ltrue, \lunkn<_p\lfalse$.
We define $\ltrue^{-1}=\lfalse, \lfalse^{-1}=\ltrue$ and $\lunkn^{-1}=\lunkn$.
% As usual, a tuple $(p,s)$ of two truth values in $\{\ltrue,\lfalse\}$  with $p\leqt s$ is identified with a three-valued truth value where $p$ is viewed as an underapproximation of the value and $s$ as an overapproximation. Thus we identify true with $(\ltrue,\ltrue)$, false with $(\lfalse,\lfalse)$ and unknown with $(\lfalse,\ltrue)$. \bart{do we use identification somewhere?} \pieter{What the what do we do what now?}
%, use $\Terms$ for the set of $\Sigma$-terms and $\fo$ for the language of standard FO over $\Sigma$.
% and use $\fo$ for the language of standard first-order logic over $\Sigma$.
% We consider the set of logical symbols of $\fo$ to formally consist of $\land$, $\neg$ and $\forall$.

A \emph{structure} is defined as usual in first-order logic:

\begin{definition}
A \emph{structure} $\I$ consists of a set $D$, called the \emph{domain} of $\I$, an assignment that maps every $n$-ary predicate symbol of $\Sigma$ to a subset of $D^n$ and an assignment that maps every $n$-ary function symbol of $\Sigma$ to a function $D^n \rightarrow D$.
\end{definition}

A structure formally represents a potential state of affairs of the world. The interpretation of a term in a structure is defined as usual.
% All structures are assumed to be structures over the domain $D$.

We assume a domain $D$, shared by all structures, to be fixed throughout the paper. Furthermore, we assume $D$ to contain the set $\A$ of principals. %Furthermore, we assume that for each $d \in D$, $d$ is a constant symbol of $\DACL$ whose interpretation in all structures is $d$.

The semantics of \DACL is based on the truth assignment of S5 modal logic, extended to the multi-agent case in such a way that mutual positive and negative introspection between agents is satisfied. While S5 modal logic is often used for formalizing the knowledge modality, we make indirect usage of it for formalizing the \say-modality. But for convenience, we will sometimes use knowledge terminology when informally explaining the formal definitions needed for defining \DACL semantics.

The following notion is used to model a single agent's knowledge:

\begin{definition}
A \emph{possible world structure} $\pws$ is a set of structures.
\end{definition}

Note that a possible world structure can be seen as a Kripke structure with the total accessibility relation. It contains all structures that are consistent with an agent's knowledge.

Possible world structures are ordered with respect to the amount of knowledge they contain.
In this sense, possible world structures that contain less structures possess more knowledge:

\begin{definition}
Given two possible world structures $\pws_1$ and $\pws_2$, we define $\pws_1 \leqpws \pws_2$ to hold if and only if $\pws_1 \supseteq \pws_2$.
\end{definition}

In order to model the interaction of the knowledge of multiple agents, we extend the notion of a possible world structure to the multi-agent case as follows:

\begin{definition}
A \emph{distributed possible world structure} (\emph{DPWS}) $\upws = (\upws_A)_{A\in\A}$ is a family consisting of a possible world structure $\upws_A$ for each principal $A \in \A$.
\end{definition}

The knowledge order on possible world structures can be extended pointwise to DPWS's. One DPWS contains more knowledge than another if each principal has more knowledge:
\begin{definition}\label{def:latticeupws}
Given two DPWS's $\upws^1$ and $\upws^2$, we define $\upws^1 \lequpws \upws^2$ iff $\upws^1_A\leqpws \upws^2_A$ for each $A\in \A$.
\end{definition}

 \begin{definition}
 We inductively define a two-valued valuation of \DACL formulas with respect to a DPWS \upws and a structure $\I$ as follows:
 \begin{align*}
 &(P(\bar{t}))^{\upws,\I}=\Tr &&\text{iff~~} \bar{t}^I\in P^I\\
 &(t_1=t_2)^{\upws,\I}=\Tr &&\text{iff~~} t_1^\I = t_2^\I\\
 &(\varphi_1 \land \varphi_2)^{\upws,\I}=\Tr &&\text{iff~~} (\varphi_1)^{\upws,\I}=\Tr\text{ and } (\varphi_2)^{\upws,\I}=\Tr\\
 &(\neg \varphi)^{\upws,\I}=\Tr &&\text{iff~~} (\varphi)^{\upws,\I}=\Fa\\
 &(\forall x \; \varphi)^{\upws,\I}=\Tr &&\text{iff for each } d \in D, \; (\varphi[x/d])^{\upws,\I}=\Tr\\
 &(t \says \varphi)^{\upws,\I}=\Tr &&\text{iff~~ $t^I \in \A$ and $\varphi^{(\upws,J)}=\Tr$ for all } J\in \upws_{t^I}
 %& \     && \quad  \text{where $Q_i$ is the theory of the same agent as $T_i$}
 \end{align*}
 \end{definition}

 Inductive definitions generally define only some of the predicates of a language, while the remaining predicates of the language function as parameters:

 \begin{definition}
 Let $\Delta = \{P_1(\bar t_1) \leftarrow \phi_1, \dots, P_n(\bar t_n) \leftarrow \phi_n\}$ be an inductive definition. Then $\defs(\Delta)$ is defined to be $\{P_1,\dots,P_n\}$ and is called the set of \emph{defined predicates} of $\Delta$. The set of predicates in $\Sigma$ that are not in $\defs(\Delta)$ is denoted $\pars(\Delta)$ and is called the set of \emph{parameters} of $\Delta$.
 \end{definition}

In order to approximate the agents' knowledge in a three-valued setting, we use \emph{distributed belief pairs} that consist of a conservative bound $\ubp^c$ and a liberal bound $\ubp^l$ of each agent's knowledge, i.e.\ it specifies what each agent knows for certain and what each agent possibly knows:

\begin{definition}
A \emph{distributed belief pair} \ubp is a pair $(\ubp^c,\ubp^l)$ of two DPWS's $\ubp^c$ and $\ubp^l$ such that $\ubp^c \leq_K \ubp^l$.
\end{definition}

The knowledge order $\leq_K$ on DPWS's induces a precision order $\leq_p$ on distributed belief pairs:

\begin{definition}
Given two distributed belief pairs $\ubp_1$ and $\ubp_2$, we define $\ubp_1 \leq_p \ubp_2$ to hold iff $\ubp_1^c \leq \ubp_2^c$ and $\ubp_2^l \leq \ubp_1^l$.
\end{definition}

Intuitively, $\ubp_1 \leq_p \ubp_2$ means that $\ubp_2$ characterizes the knowledge of the principals more precisely than $\ubp_1$.

\begin{definition}
\label{def:3valform}
We inductively define a three-valued valuation of \DACL formulas with respect to a distributed belief pair \ubp and a structure $\I$ as follows:
\begin{align*}
(P(\ttt))^{\ubp,I} & = &&  \left\{\begin{array}{ll}
                                 \Tr & \text{~if~ } \bar{t}^I\in P^I\\
                                 \Fa & \text{~if~ } \bar{t}^I\not \in P^I
                                \end{array}\right.\\
(\neg \varphi)^{\ubp,I}& = &&(\varphi^{\ubp,I})^{-1}\\
(\varphi\land \psi)^{\ubp,I}& = &&\glb_{{\leq_t}} (\varphi^{\ubp,I},\psi^{\ubp,I})\\
(\forall x \; \varphi)^{\ubp,I}& = &&\glb_{{\leq_t}} \{\varphi[x/d]^{\ubp,I}\mid d \in D\}\\
(t \says \varphi)^{\ubp,I}& = &&\left\{\begin{array}{ll}
                                 \Tr & \text{if ~} t^I \in \A \text{ and } \\
                                 &\hspace{3.2mm} \varphi^{\ubp,I'}=\Tr \text{ for all $I'\in \ubp_{t^I}^c$}\\
                                 \Fa & \text{if ~} t^I \notin \A \text{ or } \\
                                 &\hspace{3.2mm} \varphi^{\ubp,I'}=\Fa \text{ for some $I'\in \ubp_{t^I}^l$}\\
                                 \Un & \text{~otherwise}
                                \end{array}\right.
% &(P(t))^{\upws,\I}=\Tr &&\text{ iff } t^I\in P^I\\
% &(t_1=t_2)^{\upws,\I}=\Tr &&\text{ iff } t_1^\I = t_2^\I\\
% &(\varphi_1 \land \varphi_2)^{\upws,\I}=\Tr &&\text{ iff } (\varphi_1)^{\upws,\I}=\Tr\text{ and } (\varphi_2)^{\upws,\I}=\Tr\\
% &(\neg \varphi)^{\upws,\I}=\Tr &&\text{ iff } (\varphi)^{\upws,\I}=\Fa\\
% &((\forall x)\varphi)^{\upws,\I}=\Tr &&\text{ iff for each } d \in D, \; (\varphi[x/d])^{\upws,\I}=\Tr\\
% &(K_{A} \varphi)^{\upws,\I}=\Tr &&\text{ iff for each $J\in \upws_A$ }(\varphi)^(\upws,J)=\Tr
\end{align*}
\end{definition}

As explained in Section \ref{sec:motivation} below, inductive definitions in \DACL are interpreted according to the well-founded semantics for inductive definitions, as defined for example in \cite{KR/DeneckerV14}. The well-founded model of an inductive definition $\Delta$ is always defined relative to a context $\mc{O}$, which is an interpretation of the predicate symbols in $\pars(\Delta)$. We denote the well-founded model of $\Delta$ relative to $\mc{O}$ by $\wfm_\Delta(\mc{O})$.

Inductive definitions in \DACL may contain the \say-modality in the body. Since the definition of the well-founded model in \cite{KR/DeneckerV14} is only defined for inductive definition over a first-order language without any modality, we need to say something about how to interpret the \say-modality in the body. Just like formulas, we evaluate inductive definitions with respect to a DPWS \upws and a structure $\I$. The DPWS \upws assigns a truth-value to every formula of the form $k \says \phi$. When evaluating an inductive definition $\Delta$ with respect to \upws and $\I$, it should get evaluated in the same way as the inductive definition $\Delta^\upws$, which is defined to be $\Delta$ with all instances of formulas of the form $k \says \phi$ replaced by \Tr or \Fa according to their interpretation in $\upws$.

This motivates the following definition of a three-valued valuation of \DACL inductive definitions with respect to a DPWS \upws and a structure $\I$:

% \begin{definition}
% $\Delta^{\upws,\I}=\Tr$ iff $I = \wfm_{\Delta^\upws}(\I|_{\pars(\Delta)})$.
% \end{definition}

\begin{definition}
\label{def:3valID}
We define a three-valued valuation of \DACL inductive definitions with respect to a distributed belief pair \ubp and a structure $\I$ as follows:
\begin{align*}
%&(\Delta)^{\upws,I}=\Tr &&\text{ if } I=wfm_{\Delta^{\upws}}(I|_{Par(\Delta)}) \\
&\Delta^{\ubp,I}= &&\left\{\begin{array}{ll}
                                 \Tr & \text{ if } I=\wfm_{\Delta^{\ubp}}(I|_{Par(\Delta)}) \\
                                 \Fa & \text{ if } I\not\geq_p \wfm_{\Delta^{\ubp}}(I|_{Par(\Delta)}) \\
                                 \Un & \text{otherwise}
                                \end{array}\right.
\end{align*}
where
$\Delta^\ubp$ is the definition $\Delta$ with all formulas $t \says \varphi$ replaced by $\Tr$, $\Fa$ or $\Un$, according to their interpretation in $\ubp$.
\end{definition}

To understand this three-valued valuation of \DACL inductive definitions informally, remark that in a partial context ($\ubp$ is three-valued), we cannot yet evaluate the exact value of the defined predicates in the definition. We can, however, using a three-valued valuation of the definition, obtain an approximation $\wfm_{\Delta^{\ubp}}(I|_{Par(\Delta)})$  of their value.
We return $\Tr$ if this approximation is actually two-valued and equal to $I$, $\lunkn$ if $I$ is still consistent with (but not equal to) this approximation and $\lfalse$ otherwise.

We can combine the three-valued valuations for formulas and inductive definitions into a three-valued valuation of a single agent's theory as follows:

% \begin{definition}
% $T^{\upws,\I}=\Tr$ iff $\phi^{\upws,\I}=\Tr$ and $\Delta^{\upws,\I}=\Tr$ for all $\phi$ and $\Delta$ in $T$.
% \end{definition}
%
% Using this two-valued valuations of \DACL theories, one can define a semantics for \DACL that extends Moore's supported model semantics of autoepistemic logic \cite{mo85} to the multi-agent case with inductived definition: A \emph{supported model} of a distributed theory $\T$ is a DPWS \upws such that for every principal $A \in \A$,
% $$\upws_A = \{\I|\T_A^{\upws,\I}=\Tr\}.$$
%
% This definition is essentially a fixpoint characterization: A supported model of $\T$ is any fixpoint of the operator $D_\T$ mapping a DPWS to a DPWS as follows:
% \begin{definition}
% $D_\T(\upws) := (\{\I|\T_A^{\upws,\I}=\Tr\})_{A\in\A}$
% \end{definition}

\begin{definition}
\label{def:3valtheo}
We define a three-valued valuation of \DACL theories with respect to a distributed belief pair \ubp and a structure $\I$ as follows:
$$T^{\ubp,I} := \glb_{\leq_t} (\{\phi^{\ubp,I}|\phi \in T\} \cup \{\Delta^{\ubp,I}|\Delta \in T\})$$
\end{definition}

Using this three-valued valuation of \DACL theories, we can define an operator $\ubprevision$ on distributed belief pairs:

\begin{definition}
\label{def:D*}
$\ubprevision(\ubp):=(\ubprevisionL(\ubp),\ubprevisionU(\ubp))$, where
\begin{align*}
&\ubprevisionL(\ubp):=(\{I\mid   (\T_A)^{\ubp,I} \neq \Fa \})_{A\in \A} \\
&\ubprevisionU(\ubp):=(\{I\mid   (\T_A)^{\ubp,I} = \Tr \})_{A\in \A}
\end{align*}
% Note that we use $\ubprevisionL$ and $\ubprevisionU$ as shorthand to calculate the new lower- respectively upperbound given a ubp $\ubp$.
\end{definition}

In order to formally define the well-founded model, we first need to define the \emph{stable operator} $S_\T$ that maps a DPWS to a DPWS:

\begin{definition}
$S_\T(\upws)$ is defined to be the least fixpoint of the operator $O$ that maps a DPWS $\upws'$ to the DPWS $O(\upws):=\ubprevision(\upws',\upws)_1$, i.e. to the first element of the distributed belief pair $\ubprevision(\upws',\upws)$.
\end{definition}

Now we are ready to define the well-founded model of a distributed theory, the central notion of \DACL semantics:

\begin{definition}
Let $\T$ be a distributed theory. The well-founded model of $\T$, denoted $\wfm(\T)$, is the least precise (i.e.\ $\leq_p$-minimal) distributed belief pair \ubp such that $S_\T(\ubp^c) = \ubp^l$ and $S_\T(\ubp^l) = \ubp^c$.
\end{definition}

We say that a distributed theory logically implies a formula $\phi$ iff $\phi^{\wfm(\T),I} = \Tr$ for every structure $I$.

Note that for a formula $\phi$ of the form $k \says \psi$ or $\neg k \says \psi$, the value of $\phi^{\wfm(\T),I}$ does not depend on $I$. We therefore sometimes write $\phi^{\wfm(\T)}$ instead of $\phi^{\wfm(\T),I}$ for such $\phi$.

\subsection{Motivation for \DACL}
\label{sec:motivation}
% Before we define the syntax of \DACL, we first discuss some general features of the \say-based approach to access control to give some preliminary motivations for the formalism.
Van Hertum et al.\ \cite{ijcai16Cramer} have motivated the applicability of \DACL as an access control logic by discussing possible use cases, i.e. by illustrating how \DACL can be applied in certain access control scenarios. In this section we add to this motivation by use cases a more principled motivation that clarifies the advantages of \DACL over other \say-based access control logics.

An \emph{access control policy} is a set of norms defining which principal is to be granted access to which resource under which circumstances.
Specialized logics called \emph{access control logics} were developed for representing policies and access requests and reasoning about them.
A general principle adopted by most logic-based approaches to access control is that access is granted iff it is logically entailed by the policy.
% the statement expressing that access is granted is logically entailed by the policy.

% \subsection{\say-based access control logic}
There is a large variety of access control logics, but most of them use a modality $k \says$ indexed by a principal $k$ \cite{Genovese12}.
\say-based access control logics are designed for systems in which different principals can issue statements that become part of the access control policy.
$k \says \varphi$ is usually explained informally to mean that $k$ supports $\varphi$ \cite{Abadi08,Garg12,Genovese12}.
This means that $k$ has issued statements that -- together with additional information present in the system -- imply $\varphi$.
Different access control logics vary in their account of which rules of inference and which additional information may be used in deriving  statements that $k$ supports from the statements that $k$ has explicitly issued.

% Many state-of-the-art \say-based access control logics, e.g.\ Deepak Garg's BL \cite{Garg12}, are designed for application in a system based on \emph{proof-carrying authorization}  \cite{Appel,Maffei}. In such a system, an access request is always submitted together with a proof that establishes that the requester has access, and the task of the reference monitor is only to check the validity of this proof. If the proof is based on the assumption that some other principal $k$ supports some formula $\varphi$, the proof will contain the signed certificate that establishes that $k$ has made statements implying $\varphi$. In this way, assumptions of the form $k \says \varphi$ can be discharged. However, there is no way to discharge of assumptions of the form $\neg k \says \varphi$. If $k$ has not made any statements implying $\varphi$, there is no way to prove this to the reference monitor by submitting some certificates issued by $k$, as the reference monitor can never be convinced that there aren't other statements made by $k$, which just haven't been presented to the reference monitor. For this reason, many state-of-the-art \say-based access control logics

Many state-of-the-art \say-based access control logics, e.g.\ Garg's BL \cite{Garg12}, do not provide the means for deriving statements of the form $\neg k \says \varphi$ or $j \says (\neg k \says \varphi)$.
% \subsection{Modelling denial in a \say-based access control logic}
% \label{sec:denial}
However, being able to derive statements of the form $\neg k \says \varphi$ and $j \says (\neg k \says \varphi)$ makes it possible to model access denials naturally in a \say-based access control logic: Suppose $A$ is a professor with control over a resource $r$, $B$ is a PhD student of $A$ who needs access to $r$, and $C$ is a postdoc of $A$ supervising $B$. $A$ wants to grant $B$ access to $r$, but wants to grant $C$ the right to deny $B$'s access to $r$, for example in case $B$ misuses her rights.
A natural way for $A$ to do this  using the \say-modality is to issue the statement $(\neg C \says \neg \access(B,r)) \Rightarrow \access(B,r)$. This should have the effect that $B$ has access to $r$ unless $C$ denies him access.
However, this effect can only be achieved if our logic allows $A$ to derive $\neg C \says \neg \access(B,r)$ from the fact that $C$ has not issued any statements implying $\neg \access(B,r)$.

The derivation of $\neg C \says \neg \access(B,r)$ from the fact that $C$ has not issued any statements implying $\neg \access(B,r)$ is non-monotonic: If $C$ issues a statement implying $\neg \access(B,r)$, the formula $\neg C \says \neg \access(B,r)$ can no longer be derived. In other words, adding a formula to the access control policy causes that something previously implied by the policy is no longer implied. Existing \say-based access control logics are monotonic, so they cannot support the reasoning described above for modelling denial with the \say-modality.

In order to derive statements of the form $\neg k \says \varphi$, we have to assume the statements issued by a principal to be a complete characterization of what the principal supports. This is similar to the motivation behind Moore's autoepistemic logic (AEL) to consider an agent's theory to be a complete characterization of what the agent knows \cite{mo85,ai/Levesque90,ijcai/Niemela91,nonmon30/DeneckerMT11}.
This motivates an application of AEL to access control.

However, AEL cannot model more than one agent. In order to extend it to the multi-agent case, one needs to specify how the knowledge of the agents interacts. Most state-of-the-art access control logics allow $j \says (k \says \varphi)$ to be derived from $k \says \varphi$, as this is required for standard delegation to be naturally modelled using the \say-modality. In the knowledge terminology of AEL, this can be called mutual positive introspection between agents. In order to also model denial as described above, we also need mutual negative introspection, i.e. that $j \says (\neg k \says \varphi)$ to be derived from $\neg k \says \varphi$. Van Hertum et al.\ \cite{ijcai16Cramer} have defined the semantics of \DACL in such a way that mutual positive and negative introspection between the agents is ensured.

\DACL also incorporates inductive definitions, thus allowing principals to define access rights and other properties relevant for access control in an inductive way. Inductive (recursive) definitions are a common concept in all branches of mathematics. Inductive definitions in \DACL are intended to be understood in the same way as in the general purpose specification language \fodot of the \idp system \cite{WarrenBook/DeCatBBD14}. Denecker \cite{Denecker:CL2000} showed that in classical logics, adding definitions leads to a strictly more expressive language.

Because of their rule-based nature, formal inductive definitions also bear strong similarities in syntax and formal semantics with logic programs. A formal inductive definition could also be understood intuitively as a logic program which has arbitrary formulas in the body and which defines only a subset of the
predicates in terms of parameter predicates not defined in the definition.

Most of the semantics that have been proposed for logic programs can be adapted to inductive definitions. Denecker and Venneckens \cite{KR/DeneckerV14} have argued that the well-founded semantics correctly formalizes our intuitive understanding of inductive definitions,
and hence that it is actually the \emph{right} semantics. Following them, we use the well-founded semantics for inductive definitions.
%
%NEW\\
%Inductive definitions are essentially logic programs in which clause bodies can contain arbitrary first-order formulas. The combination of the declarative specification and the imperative programming environment makes this logic programming tool suitable for solving a large variety of different problems.
%The semantics used for the inductive definitions are the well-founded semantics \cite{GelderRS91}. As argued in \cite{tocl/DeneckerT08} this captures the intended meaning of all common forms of definitions  and extends the least model semantics of Prolog for negations. A structure $S$ satisfies $\Delta$ if the interpretation of a defined predicate $P$ in the well-founded model of $S$, constructed relative to the restriction of $S$ to the parameters of $\Delta$ is exactly the relation $P^{S}$.

%In section \ref{sec:use-cases} we give an example policy to show the usefulness of including inductive definitions in \DACL.

%\input{use-cases}

\section{\foid and the \idp-system}\label{sec:IDP}
The decision procedure defined in the next section is based on the \idp system, so we briefly describe this system and its language \foid.

\subsection{Why \idp?}

\idp \cite{WarrenBook/DeCatBBD14} is a Knowledge Base System which combines a declarative specification (\emph{knowledge base}), written in an extension of first-order logic, with an imperative management of the specification via the Lua \cite{SPE/IerusalimschyFC96} scripting language. The extension of first-order logic supported by \idp allows for inductive definitions. As explained and motivated in Section \ref{sec:motivation}, \DACL also supports inductive definitions. This makes the usage of \idp as a basis for the decision procedure a natural choice. %Inductive definitions can also be (loosely) understood as logic programs in which clause bodies can contain arbitrary first-order formulas. The combination of the declarative specification and the imperative programming environment makes this logic programming tool suitable for solving a large variety of different problems.

\idp supports multiple \emph{inferences} that can be used to perform a range of reasoning tasks on a given specification. We make use of two of \idp's inferences, defined in Section \ref{sec:foid} below, in order to perform the meta-reasoning about a principal's \DACL theory that is necessary to determine which queries to other principals are really necessary in order to resolve a query asked to the principal.

% The inferences we make use of are the \emph{satisfiability check inference}, which -- given a partial structure and a theory -- checks whether this structure can be extended to a model of the theory, and
%
% such as meta-reasoning over the theories. In our application of \idp, we perform meta-rasoning over the \DACL theory of a principal by translating it into an FO(ID) theory. This meta-reasoning allows us to determine which queries to other principals are really necessary in order to resolve the query, to the we can minimizze the communication between principals.
%TODO: Check which inferences we are actually using.
% We will focus only on the satisfiability check inference and the calculation of unsatisfiable (partial) models; for other inferences and more details, we refer to \cite{WarrenBook/DeCatBBD14}. %We use these inference tasks to do meta-reasoning over the theories.
%RE-WRITE THIS

% The imperative programming environment supports a rich set of operators and inferences to take the logical building blocks in a specification  and use these to manipulate them and solve more complex reasoning tasks

\subsection{\foid and some \idp inferences}
\label{sec:foid}
The specification language supported by \idp is an extension of \fo with types, inductive definitions, aggregates, arithmetic and partial functions, denoted FO(T,ID,Agg,Arit,PF) \cite{KBParadigm16}. We only make use of the subset of FO(T,ID,Agg,Arit,PF) called \foid, which extends \fo only with inductive definitions. The formal definition of \foid syntax is the standard definition of \fo syntax extended by the following definition of inductive definitions: An \emph{inductive definition} $\Delta$ is a set of rules of the form $\forall \overline{x}:P(\overline{x}) \leftarrow \varphi(\overline{y})$, where $\overline{y} \subset \overline{x}$ and $\varphi(\overline{y})$ is an \fo formula. Just as in \DACL, inductive definitions are given the well-founded semantics of inductive definitions \cite{KR/DeneckerV14}. An \foid theory is a set of inductive definitions and \fo formulas.

The \idp inferences for \foid that we make use of are defined for \emph{finite partial structures}.
% Given a vocabulary $\Sigma$, a finite partial structure gives an interpretation to (a subset of) the elements of $\Sigma$.
Before we define formally what a partial structure is, we define the concept of a \emph{partial set}, a generalization of a set in a three-valued context:

\begin{definition}
 A \emph{partial set} on the domain $D$ is a function from $D$ to $\{\Tr,\Fa,\Un\}$, where $\Tr$, $\Fa$ and $\Un$ stand for the three truth-values \emph{true}, \emph{false} and \emph{undefined}.
\end{definition}

A partial set is two-valued (or total) if $\Un$ does not belong to its range.

Given a vocabulary $\Sigma$, a partial structure gives an interpretation to the elements of~$\Sigma$:

\begin{definition}
 A \emph{partial structure} over our fixed vocabulary $\Sigma$ is a tuple $(D, \mathcal{I})$, where the domain $D$ is a set, and $\mathcal{I}$ is an assignment function that assigns an interpretation to each symbol in $\Sigma$. For a predicate symbol $P$ of arity $n$, the interpretation $P^{\mathcal{I}}$ is a partial set on the domain $D^n$;
%two subsets of $D^n$, denoted as $P^\mathcal{I}_{ct}$ and $P^\mathcal{I}_{cf}$ such that $P^\mathcal{I}_{ct}$ and $P^\mathcal{I}_{cf}$ are disjoint;
for a function symbol $f$ of arity $n$,  $f^\mathcal{I}$ is a function from $D^n$ to $D$.
\end{definition}

When the predicate symbol $P$ has arity $0$, i.e.\ is a propositional variable, $P^{\mathcal{I}}$ is just an element of $\{\Tr,\Fa,\Un\}$.

We call a partial structure $S=(D,\mathcal{I})$ \emph{finite} iff its domain $D$ is finite. We call a partial structure \emph{total} iff $P^\mathcal{I}$ is total for all $P \in \Sigma$.
%The structures can be order lexicographically in the usual way.
%A set of structures can be totally ordered in lexicographical way.

The interpretation of terms $t^{\mathcal{I}}$ and the satisfaction relation $\models$ for total structures $S \models \varphi$ are defined as usual.

A precision order can be defined on partial structures:
\begin{definition}
 Given two partial structures $S = (D,\mathcal{I})$ and $S'=(D,\mathcal{I}')$, we write $S \leq_p S'$ (and say $S'$ \emph{is more precise than} $S$, or $S'$ \emph{expands} $S$) iff for every function symbol $f$, $f^{\mathcal{I}'} = f^\mathcal{I}$, and for every predicate symbol $P$ of arity $n$ and every tuple $\bar d \in D^n$ of domain elements such that $P^\mathcal{I}(\bar d) \neq \Un$, we have $P^{\mathcal{I}'}(\bar d) = P^\mathcal{I}(\bar d)$.
\end{definition}

We now define the two \idp inferences that we make use of. The first one, which is called \texttt{sat} in the \idp system, determines whether a given finite partial structure is a partial model of a given theory:

\begin{definition} Let $S$ be a partial structure and $\mathcal{T}$ an \foid theory. We say $S$ is a \emph{partial model} for $\mathcal{T}$ if and only if there exists a total structure $S^\prime \geq_p S$ such that $S^\prime \models \mathcal{T}$.
\end{definition}

% \begin{definition} Let $S$ be a partial structure, $S^\prime$ be a total structure and $\mathcal{T}$ a theory. We say $S^\prime$ \emph{expands} $S$ for a theory $\mathcal{T}$ if and only if $S^\prime \supset S$ and $S^\prime \models T$.
% \end{definition}

%\begin{definition} Let $S$ be a partial structure, $S^\prime$ be a total structure and $\mathcal{T}$ a theory. We define $modelexpand(\mathcal{T},S)$ to be the set of all total structures $S^\prime$ such that $S^\prime$ expands $S$ for theory $\mathcal{T}$.
%\end{definition}

%\begin{definition}  Let $S$ be a partial structure and $\mathcal{T}$ a theory. We define $sat(\mathcal{T},S)$ to be the function $S \models \mathcal{T}$.
%\end{definition}

%If the input structure, now interpreting all types, is inconsistent, this should preferably be detected as soon as possible. This can happen for a number of reasons, such as

The second \idp inference that we make use of, which is called \texttt{unsatstructure} in the \idp system, picks a minimal partial structure inconsistent with a given theory and less precise than a given finite partial structure:

\begin{definition} Let $S$ be a partial structure and $\mathcal{T}$ be an \foid theory. We define $min\_incons\_set(\mathcal{T},S)$ to be the set of $\leq_p$-minimal partial structures $S^\prime \leq_p S$ such that $S^\prime$ is not a partial model of $\mathcal{T}$. %has no models that expand $S$.
% We define $min\_incons(\mathcal{T},S)$ to be the lexicographically first structure in $min\_incons\_set(\mathcal{T},S)$.
\end{definition}

If the input structure $S$ is not a partial model of the input theory $\mathcal{T}$, then \linebreak $min\_incons\_set(\mathcal{T},S)$ is always non-empty, and \texttt{unsatstructure} picks an element from it and returns it. If $S$ is a partial model of $\mathcal{T}$, \texttt{unsatstructure} throws an error.

\section{Decision Procedure}\label{sec:query}

In this section, we define a query-driven decision procedure for \DACL, which allows to determine access rights while avoiding redundant information flow between principals in order to enhance security and reduce privacy concerns. This decision procedure is implemented with the help of the \idp system. Given that \idp can only work with finite domains, the decision procedure also assumes the domain $D$ to be finite.\footnote{Given that propositional logic has the same expressive power as first-order logic over a finite domain, the decision procedure could in theory also be viewed as a decision procedure for the propositional fragment of \DACL. But since first-order logic over a finite domain can model the same scenarios more concisely and more naturally than propositional logic, we stick to the first-order variant of \DACL with a finite-domain assumption.} For simplicity, we assume that for every principal there is a constant symbol referring to that principal, and that the $t$ in every formula of the form $t \says \phi$ is such a  constant symbol. This simplification could be removed, but would make the description of the decision procedure much more complicated.

The decision procedure is query-driven in the following sense: A query in the form of a \DACL formula $\phi$ is posed to a principal $A$. $A$ determines whether her theory contains enough information in order to verify $\phi$. It can happen that $A$ cannot verify $\phi$ just on the basis of her theory, but can determine that if a certain other principal supports a certain formula, her theory implies the query. For example, $A$'s theory may contain the formula $B \says p \Rightarrow \phi$. In this case, $A$ can forward a remote sub-query to $B$ concerning the status of $p$ in $B$'s theory. If $B$ verifies the sub-query $p$ and informs $A$ about this, $A$ can complete her verification of the original query $\phi$.

% Note that for this communication between the principals about the sub-queries, we assume a certain degree of cooperativity between the principals. This assumption of cooperativity has already been motivated and justified in Subsection \ref{sec:motivation}.

\subsection{Motivation for avoiding redundant information flow}
\label{sec:minimization_motivation}
One reason to avoid redundant information flow is to reduce communication overhead. The rest of this section considers an additional motivation for avoiding redundant information flow.

Consider the following distributed theory of the two principals $A$ and $B$:

% \begin{minipage}{0.48\linewidth}
\[ T_A = \left\{
\begin{array}{l}
     r \land B \says s \Rightarrow p  \\
     r
\end{array}
 \right\} \]
% \end{minipage}%~~~~~~~~
% \begin{minipage}{0.48\linewidth}
\[ T_B = \left\{
\begin{array}{l}
     s  \\
     \neg s \land A \says p \Rightarrow p
\end{array}
 \right\} \]
% \end{minipage}
% ~
\vspace{1.2mm}

In both theories we have a guard, namely, $r$ for theory $T_A$ and $\neg s$ for theory $T_B$. The guards can be checked locally before performing a remote query to other theories. If $A$ is queried about $p$, we can continue with the evaluation and query $T_B$ about the truth value of $s$, since the guard $r$ is true. If $B$ is queried about $p$, on the other hand, we do not need to perform any remote query since it will always fail due to the guard being false in the theory.

If $B$ nevertheless were to send the remote subquery $p$ to $A$, this would be an unnecessary sub-query. Since $B$ does not actually need to know whether $A$ supports $p$, this would violate the need-to-know principle \cite{Sandhu94}, which states that a principal should only be given those accesses and be provided with that non-public information which the principal requires to carry out her responsibilities. %Assuming that despite their cooperativity, principals do not want their access control policy to be completely public, i.e.\ 
Additionally, it is reasonable to assume that for privacy considerations, the principals do not want to disclose their full access control policies to other principals, but only the parts that are required to verify a given access request. So there are both security and privacy reasons for $B$ not to send the remote subquery $p$ to $A$.

% , it is reasonable to try to fulfill the need-to-know principal in the communication between principals about remote sub-queries needed for evaluating the primary query.

In general, more complex behaviors rather than guards can occur in a distributed theory. The decision procedure we define avoids redundant communication even when more complex reasoning is required to determine which sub-queries have a chance of leading to a verification of the primary query and which subqueries are certainly not useful. As discussed in Section \ref{sec:complexity}, this ideal avoidance of redundant communication is computationally very expensive, so in a practically applicable system, a trade-off between the security and privacy motivation for avoiding redundant communication on the one hand and computational cost on the other hand would need to be found. Nevertheless, we consider our ideal avoidance of redundant communication an interesting proof of concept as a foundation for further research.

The decision procedure that determines whether a query $\alpha$ is true given a distributed theory $\T$ is composed of two distinct modules. The first module, the \emph{Query Minimization Procedure}, looks at the theory of the agent to whom the query is directed, and determines minimal sets of remote calls to other theories that could verify the query. The second module, the \emph{Communication Procedure}, takes care of communication between the principals, including the handling of the loops that may occur.

\subsection{Query Minimization Procedure}

%The language of \foid was briefly defined in Section \ref{sec:IDP}. %In our application specific domain we extend the \foid language.% into \lang.
%The vocabulary $\Sigma_{\DACL}^{\fo}$
%language \lang
%contains four additional sorts (types), the sort \verb=TruthValue= (\verb=TVal=) that refers to Kleene's three valued logic truth values, the sort \verb=SaysAtom= (\verb=SAt=) that refers to the modal atoms appearing in the theory, the sort \verb=propVar= (\verb=pVar=) for two valued propositional variables and the sort \verb=defPropVar= (\verb=dVar=) for three valued propositional variables.
%contains an additional sort (type) \verb=SaysAtom= (\verb=SAt=) that refers to the modal atoms appearing in the theory.

%The vocabulary $\Sigma_{\DACL}^{\fo}$ associated to the language \lang
%$\Sigma_{\DACL}^{\fo}$ contains the standard definitions and symbols of \fo plus the following predicate symbols to encode the desired behaviour of modal atoms. We define a predicates $w_{A\_{\verb=says=\_\psi}}$, $p^+_{A\_{\verb=says=\_\varphi}}$ and $p^-_{A\_{\verb=says=\_\varphi}}$ for every modal statement $A \says \varphi$ of \DACL.

\subsubsection{Translation Mechanism.}

In order to implement a query mechanism for \DACL in IDP we need to translate \DACL theories to \foid theories.
The only syntactic construct of \DACL that does not exist in \foid is the \say-modality. So when translating a \DACL theory $T$ to an \foid theory $\mathcal{T}$, we need to replace each \say-atoms in $T$ by some first-order formula. For this purpose, we extend the vocabulary $\Sigma$ to an extended vocabulary $\Sigma'$ by adding to it new propositional variables of the form $p^+_{A\_{\verb=says=\_\varphi}}$, $p^-_{A\_{\verb=says=\_\varphi}}$ and $w_{A\_{\verb=says=\_\phi}}$ for every modal statement $A \says \varphi$ of \DACL.

% We define propositional variables $w_{A\_{\verb=says=\_\psi}}$, $p^+_{A\_{\verb=says=\_\varphi}}$ and $p^-_{A\_{\verb=says=\_\varphi}}$ for every modal statement $A \says \varphi$ of \DACL.

Before we formally define the translation mechanism, let us first motivate why we have the three different propositional variables $p^+_{A\_{\verb=says=\_\varphi}}$, $p^-_{A\_{\verb=says=\_\varphi}}$ and $w_{A\_{\verb=says=\_\phi}}$ for translating different occurrences of the same \say-atom $A \says \phi$. First, note that the well-founded semantics of \DACL evaluates \say-atoms in a three-valued way. The propositional variables $p^+_{A\_{\verb=says=\_\varphi}}$ and $p^-_{A\_{\verb=says=\_\varphi}}$ are used to model the three-valued valuation of $A \says \phi$ in the two-valued logic \foid: On the precision order $<_p$ on the three truth values $\ltrue$ (\emph{true}), $\lfalse$ (\emph{false}) and $\lunkn$ (\emph{undefined})
induced by $\lunkn<_p\ltrue$ and $\lunkn<_p\lfalse$, the propositional variable $p^+_{A\_{\verb=says=\_\varphi}}$ represents the upper bound for the truth value of $A \says \varphi$ and $p^-_{A\_{\verb=says=\_\varphi}}$ the lower bound. For this reason, we replace every positive occurrence of $A \says \phi$ by $p^+_{A\_{\verb=says=\_\varphi}}$, and every negative occurrence by $p^-_{A\_{\verb=says=\_\varphi}}$. Given that occurrences of a formula in an inductive definition cannot be meaningfully termed only positive or only negative, we first replace occurrences of $A \says \phi$ in an inductive definition by $w_{A\_{\verb=says=\_\phi}}$ and add two implications to the theories that express the equivalence between $w_{A\_{\verb=says=\_\psi}}$ and $A \says \phi$.

The translation function $t$ only performs this first step of the translation mechanism:

\begin{definition}Let $T$ be a \DACL theory. We define $t(T)$ to be a \DACL theory equivalent to $T$, constructed as follows:

For every modal atom $A \says \varphi$ occurring in the body of an inductive definition in theory $T$:
\begin{itemize}
  \item  Replace $A \says \varphi$ by the propositional variable $w_{A\_{\verb=says=\_\varphi}}$
  \item  Add to $t(T)$ the two formulae $w_{A\_{\verb=says=\_\varphi}} \Rightarrow A \says \varphi$ and $A \says \varphi \Rightarrow w_{A\_{\verb=says=\_\varphi}}$.
\end{itemize}
%the translation function $t$ from \DACL inductive definitions containing modal atoms to \foid inductive definitions without modal atoms is defined as follows. We construct $t(T)$ by replacing every occurrence of a modal atom $A \says \varphi$ by the propositional variable $w_{A\_{\verb=says=\_\varphi}}$ and adding the formulae $w_{A\_{\verb=says=\_\varphi}} \to A \says \varphi$ and $A \says \varphi \to w_{A\_{\verb=says=\_\varphi}}$ to $T$.

%For every inductive definition $\{\{p_i \leftarrow \varphi_{i}\}\} \in T$, the translation of $t(T)$:
%\begin{itemize}
%  \item If $A \says \psi \not \in \varphi_i$, then $t(T) := T$
%  \item If $A \says \psi \in \varphi_i$, then substitute the modal sub-formula with a new predicate $w_{A\_{\verb=says=\_\psi}}$. Formally,
%  $t(T) := \{\{p_i \leftarrow \varphi_i[A \says \psi \setminus w_{A\_{\verb=says=\_\psi}} ]\}\}$
%  \item For every term $w_{A\_{\verb=says=\_\psi}}$, add the following formulae to $T$, $w_{A\_{\verb=says=\_\psi}} \to A \says \psi$ and  $A \says \psi \to w_{A\_{\verb=says=\_\psi}}$. Formally, $t(T) := t(T) \cup \{w_{A\_{\verb=says=\_\psi}} \to A \says \psi\} \cup \{ A \says \psi \to w_{A\_{\verb=says=\_\psi}}\} $
%%  \item $t(\varphi \land \psi) := \verb=AND(=t(\varphi),t(\psi)\verb=)=$.
%%  \item $t(A ~\says~ \varphi) := \verb=A=_{\verb=says=\_\varphi}$.
%\end{itemize}
\end{definition}

We next introduce the notion of polarity necessary to further translate \DACL theories into \foid theories. 

%It is an extended notion of polarity, which is rather intuitive item. It is clear, that every subformula has some polarity and if some of its superior connectives are equivalence, then it is both positive and negative. This anomalous property causes problems in conjunction with an existential variable within unification of variables and its solution is described below, within the lifting of inferences section. Following theorem gives an algorithm for the determination of polarity.

\begin{definition} Let $\phi$ be a \DACL formula. The \emph{polarity} of an occurrence of a subformula of $\phi$ is defined recursively as follows:
\begin{itemize}
\item The occurrence of $\phi$ in $\phi$ is a \emph{positive} occurrence.
\item Given a positive (resp. \emph{negative}) occurrence of the subformula $\neg \psi$ of $\phi$, the occurence of $\psi$ in this occurrence of $\neg \psi$ is \emph{negative} (resp. \emph{positive}) in $\phi$.
%\item[iii-] If $\psi \lor \chi$ is a \emph{positive} (resp. \emph{negative}) sub-formula of $\phi$, then $\psi$ and $\chi$ are both \emph{positive} (resp. \emph{negative}) sub-formulas of $\phi$.
\item Given a \emph{positive} (resp. \emph{negative}) occurrence of the subformula $\psi \land \chi$ of $\phi$, the occurrences of $\psi$ and $\chi$ in this occurrence of $\psi \land \chi$ are both \emph{positive} (resp. \emph{negative}) in $\phi$.
%\item[v-] If $\psi \to \chi$ is a \emph{positive} (resp. \emph{negative}) sub-formula of $\phi$, then $\psi$ is a \emph{negative} (resp. \emph{positive}) sub-formula and $\chi$ is a \emph{positive} (resp. \emph{negative}) sub-formula of $\phi$.
%\item[vi-] If $\psi \leftrightarrow \chi$ is a sub-formula of $\phi$, then every sub-formula of $\psi$ and $\chi$ is positive and negative sub-formula of $\phi$.
\end{itemize}
\end{definition}

\begin{definition}
 Let $T$ be a \DACL theory, let $\phi \in T$. We call a positive (resp. negative) occurrence of a subformula $\psi$ of $\phi$ a \emph{positive} (resp. \emph{negative}) \emph{occurrence} of $\psi$ in~$T$.
\end{definition}

Now we can define the translation function $\tau$ from \DACL theories to \foid theories:

%\begin{definition}[$\tau$ translation] The translation function $\tau$ from \DACL theories to FO(ID) is inductively defined as follows:
%\begin{itemize}
%  \item $\tau(\emptyset):= \emptyset$.
%  \item $\tau(T \cup \{\varphi\}):= \tau(T) \cup \{ t(\varphi) \ne \textbf{f} \} $
%  \item $\tau(T \cup \{\{p_1 \leftarrow \varphi_{1_1}, \ldots, p_1 \leftarrow \varphi_{1_n}, \ldots,
%                        p_m \leftarrow \varphi_{m_1}, \ldots, p_m \leftarrow \varphi_{m_k}  \ \}\}):= \tau(T) \cup
%            \{\{\verb=defPassign(=np_1\verb=)= = V \leftarrow t\verb=(=\varphi_{1_1} \lor \ldots \lor \varphi_{1_n}\verb=)= = V, \ldots,
%            \verb=defPassign(=p_m\verb=)= = V \leftarrow t\verb=(=\varphi_{m_1} \lor \ldots \lor \varphi_{m_k} \verb=)= = V \ \}\}$
%\end{itemize}
%
%We write $\mathcal{T}$ for the theory $T$ after the translation, $\mathcal{T} = \tau(T)$
%\end{definition}

\begin{definition} Let $T$ be a \DACL theory. $\tau(T)$ is constructed from $t(T)$ by performing the following replacements for every \say-atom $A \says \varphi$ occurring in $t(T)$ that is not a subformula of another \say-atom:
\begin{itemize}
  \item Replace every positive occurrence of $A \says \varphi$ in $T$ by $p^+_{A\_{\verb=says=\_\varphi}}$.
  \item Replace every negative occurrence of $A \says \varphi$ in $T$ by $p^-_{A\_{\verb=says=\_\varphi}}$.
\end{itemize}

%We write $\mathcal{T}$ for the theory $T$ after the translation, $\mathcal{T} = \tau(T)$
\end{definition}

%%In addition to the translation procedure, we need to add a new restriction to the theory $\mathcal{T}$.
%%Modal atoms are translated in two different ways depending in the context they appear (positive or negative). If both propositional variables $p^+_{A\_{\verb=says=\_\varphi}}$ and $p^-_{A\_{\verb=says=\_\varphi}}$ appear in theory $\mathcal{T}$, we need to make sure that these variables remain linked and their values are consistent which each other. We ensure this by adding the following formulae to $\mathcal{T}$ for every propositional variable symbols $p^{\pm}_{A\_{\verb=says=\_\varphi}}$:
%%
%%
%%$p^-_{A\_{\verb=says=\_\varphi}} \to p^+_{A\_{\verb=says=\_\varphi}}$
%%
%%
%%This captures the desired behaviour that can be assigned to modal atom $A \says \varphi$, striping away any inconsistent statements.

We will illustrate the translation procedure with a simple example, which we will use as a running example to be extended throughout the section.

\begin{example}\label{ex:example} Let $\A=\{A,B,C\}$, and let the distributed theory $\T$ consist of the folowing three \DACL theories:

% \begin{minipage}{0.48\linewidth}
  \[ T_A = \left\{
\begin{array}{ll}
   &  \{ \ p \leftarrow B ~\says~ p,  \\
   & p \leftarrow r \ \} \\
   &  p \land s \land B ~\says~ z \Rightarrow z \\
   & r \lor \neg r \Rightarrow s\\
   & B ~\says~ r \lor \neg (B ~\says~ r) \Rightarrow z
\end{array}
 \right\} \]
% \end{minipage} \hfill
% \begin{minipage}{0.48\linewidth}
\[ T_B = \left\{
\begin{array}{ll}
   &  p  \\
   &  C ~\says~ z \Rightarrow z \\
   & C ~\says~ r \Rightarrow r
\end{array}
 \right\} \]
\[ T_C = \left\{
\begin{array}{ll}
   &  \neg(B ~\says~ z) \Rightarrow z  \\
   &  B ~\says~ r \Rightarrow r
\end{array}
 \right\} \]
% \end{minipage}
% ~

We translate these theories as follows:

% \begin{minipage}{0.48\linewidth}
\[ \tau(T_A) = \left\{
\begin{array}{ll}
& \{ \ p \leftarrow w_{B\_{\texttt{says}\_p}},	\\
& p \leftarrow r \ \} \\
& w_{B\_{\texttt{says}\_p}} \Rightarrow p^+_{B\_{\texttt{says}\_p}}\\
& p^-_{B\_{\texttt{says}\_p}} \Rightarrow w_{B\_{\texttt{says}\_p}}\\
& p \land s \land p^-_{B\_{\texttt{says}\_z}} \Rightarrow z\\
& r \lor \neg r \Rightarrow s\\
& p^-_{B\_{\texttt{says}\_r}} \lor \neg p^+_{B\_{\texttt{says}\_r}} \Rightarrow z\\
%& p^-_{B\_{\texttt{says}\_p}} \to p^+_{B\_{\texttt{says}\_p}}\\
%& p^-_{B\_{\texttt{says}\_r}} \to p^+_{B\_{\texttt{says}\_r}}
\end{array}
 \right\} \]
% \end{minipage}
% \begin{minipage}{0.48\linewidth}
\[ \tau(T_B) = \left\{
\begin{array}{ll}
& p\\
& p^-_{C\_{\texttt{says}\_z}} \Rightarrow z \\
& p^-_{C\_{\texttt{says}\_r}} \Rightarrow r
\end{array}
 \right\} \]
\[ \tau(T_C) = \left\{
\begin{array}{ll}
& \neg p^+_{B\_{\texttt{says}\_z}} \Rightarrow z\\
& p^-_{B\_{\texttt{says}\_r}} \Rightarrow r
\end{array}
 \right\} \]
% \end{minipage}
\end{example}

% DEFINITION OF APPROPRIATE STRUCTURE %
%%In our current setting the structures are composed of finite domains for each type. We assume that the set of agents is finite, the formulas of \DACL are infinite. To represent the \DACL reified formulae in a structure, we define a finite subset. In order to ensure that this finite subset represents the formulae used in the theory, we introduce the notion of \emph{appropriate structure} for a theory.
%
%%\begin{definition} A (partial) structure $S = (D_{\tiny \verb=formula=},D_{\tiny \verb=agent=}, \mathcal{I})$ is \appropriate for a theory $T$ if $S$ is a proper \DACL structure and $D_{\tiny \verb=formula=}$ contains all the formulae occurring in $T$.
%%\end{definition}
%
%%Given a theory $\mathcal{T}$ we can always construct the \appropriate structure $S$, being the sub-formulae closure for all the symbols used as terms in the interpretation predicate \verb=eval=.
%
%%Intuitively we generate an \appropriate structure $S$ for a theory $\mathcal{T}$ by replacing every function with a new symbol composed as the conjunction of their arguments i.e. \verb=SAYS(B,p)=, then we can generate a new symbol \verb=B-SAYS-p=.
%
%%As discussed in previous sections, the information regarding the authorization policy is embedded through in the theory itself, and not in the structure associated to the theory (as it would be common in any other application domain that extends IDP). 

% ~
\subsubsection{Query Minimization Procedure.} \label{subsec:minimization}

The query minimization procedure works as follows: given a theory $T$ and a query $\alpha$, the procedure returns a set $\mathbb{L}$ of sets of modal atoms. The intended meaning of $\mathbb{L}$ is as follows: When all modal atoms in a set $L \in \mathbb{L}$ can be determined to be true, the query $\alpha$ succeeds, and $\mathbb{L}$ is the set of all sets $L$ with this property. This means that if $\mathbb{L} = \{\}$, the query necessarily fails, whereas if $\mathbb{L}=\{\{\}\}$ (contains the empty set), the query necessarily succeeds.
%, that need to be resolved by querying other theories. If the procedure returns an empty set, it means that the theory can assign a truth value to the query $\alpha$ without additional information (contained in other theories). %Note that $\mathcal{Q}$ does not contain any occurrence of modal atoms, thus there is no need fro translation.

A partial structure $S$ over the extended vocabulary $\Sigma'$ contains information about the truth values of the propositional variables of the form $p^-_{A\_{\verb=says=\_\varphi}}$ and $p^+_{A\_{\verb=says=\_\varphi}}$. Taking into account that $p^-_{A\_{\verb=says=\_\varphi}}$ and $p^+_{A\_{\verb=says=\_\varphi}}$ are used to represent the three-valued valuation of $A \says \phi$, this information can also be represented by a set of \say-literals, which we denote $L^S$:

% \begin{definition} We define $A^{\says}$ to be the set of propositional variables of the form $p^-_{A\_{\verb=says=\_\varphi}}$ or $p^+_{A\_{\verb=says=\_\varphi}}$.
% \end{definition}

%Each modal atom is associated to a procedural call that queries the remote theory. The valuation of these external queries requires communication. Limiting remote queries to only those which are relevant minimizes the communication overload.
%
%
%The intuition behind this query mechanism is to be able to identify the minimal sets of $A^{\says}$ atoms necessary for a successful query.

% Translated theories $\tau(T)$ contain $p^\pm_{A\_{\verb=says=\_\varphi}}$ propositional variables generated by the translation $\tau$, but queries are defined in terms of \DACL formulae. %In order to extract and reuse information about the relevant modal atoms contained in $\mathcal{T}$ is necessary to define a translation from $\tau$ translated modal atoms to back \DACL formulae.
% This translation is performed accessing the information enclosed in the structures.

%\begin{definition} For a structure $S$ and a set $A$ of propositional variables. We define $S[A]$ to be the set of literals in $S$ for the atoms in the set $A$. Formally:\\
%$S[A] := \{P(t_1, \ldots t_n): P(t_1, \ldots t_n) \in A \text{~and~} (t_1^S, \ldots t_n^S) \in P^S_{ct}\} \cup \{\neg P(t_1, \ldots t_n): P(t_1, \ldots t_n) \in A \text{~and~} (t_1^S, \ldots t_n^S) \in P^S_{cf}\}$
%\end{definition}

\begin{definition} For a partial structure $S=(D,\mathcal{I})$, we define $L^S$ to be 
$$\{A \says \varphi \;|\; (p^-_{A\_{\verb=says=\_\varphi}})^\mathcal{I} = \Tr\} \cup  \{\neg A \says \varphi \;|\; (p^+_{A\_{\verb=says=\_\varphi}})^\mathcal{I} = \Fa\}$$
\end{definition}

%\begin{definition}[$\beta$ translation] Let $S$ be a structure. The translation function $\beta$ from $p^\pm_{A\_{\verb=says=\_\varphi}}$ propositional variables to \DACL formulae is defined as follows:
%For every $p^\pm_{A\_{\verb=says=\_\varphi}}$ propositional variable in $L^S$:
%\begin{itemize}
%  \item If $p^\pm_{A\_{\verb=says=\_\varphi}} \in L^S$, then $A \says \varphi$.
%  \item If $\neg p^\pm_{A\_{\verb=says=\_\varphi}} \in L^S$, then $\neg A \says \varphi$.
%\end{itemize}
%\end{definition}

%\begin{definition} We define $\mathbb{S}$ to be the set of structures where for each $S \in \mathbb{S}$, for every $p^\pm_{A\_{\texttt{says}\_\varphi}} \in S$, $S \not \models p^+_{A\_{\texttt{says}\_\varphi}}$ or $S \models p^-_{A\_{\texttt{says}\_\varphi}}$ but not both.
%\end{definition}

We say that a \say-atom $A \says \phi$ \emph{occurs directly} in a \DACL theory, if some occurrence of $A \says \phi$ in $\T$ is not a subformula of another \say-atom. In the Query Minimization Procedure, we need to take into account all possible three-valued valuations of the \say-atoms directly occurring in the input \DACL theory $T$. Such a valuation can be represented by a partial structure that contains information only about propositional variables of the form $p^+_{A\_\texttt{says}\_\varphi}$ and $S \models p^-_{A\_\texttt{says}\_\phi}$, and for which this information is coherent in the sense that the truth values assigned to $p^+_{A\_\texttt{says}\_\varphi}$ and $S \models p^-_{A\_\texttt{says}\_\phi}$ are compatible. This is made formally precise in the following definition of the set $\mathbb{S}_T$ that contains all structures that represent three-valued valuations of \say-atoms directly occurring in $T$:

\begin{definition} Let $T$ be a \DACL theory. We define $\mathbb{S}_T$ to be the set containing every partial structure $S=(D,\mathcal{I})$ over vocabulary $\Sigma'$ satisfying the following properties:
\begin{itemize}
 \item $P^{\mathcal{I}^S} = \mathbf{u}$ for every symbol in $\Sigma'$ that is not of the form $p^+_{A\_\texttt{says}\_\varphi}$ or $p^-_{A\_\texttt{says}\_\phi}$ for some \say-atom $A \says \varphi$ occurring in $\tau(T)$.
 \item For every \say-atom $A \says \phi$, $(p^+_{A\_\texttt{says}\_\varphi})^\mathcal{I} \neq \Tr$.
 \item For every \say-atom $A \says \phi$,  $(p^-_{A\_\texttt{says}\_\varphi})^\mathcal{I} \neq \Fa$.
 \item For no \say-atom $A \says \phi$, $(p^+_{A\_\texttt{says}\_\varphi})^\mathcal{I} = \Fa$ and \newline $(p^-_{A\_\texttt{says}\_\varphi})^\mathcal{I} = \Tr$.
\end{itemize}
\end{definition}

We are now ready to define the Query Minimization Procedure. Its pseudo-code is as follows (Algorithm 1). Please note that lines 4 and 5 are implemented using the \idp inferences \texttt{sat} and \texttt{unsatstructure} that we defined in Section \ref{sec:foid}.

\begin{algorithm}[h!t]
\caption{\textsf{Query Minimization Procedure} \label{alg:query}}
\begin{algorithmic}[1]
\REQUIRE{theory $T$, \DACL query $\alpha$ }
\ENSURE{set $\mathbb{L}$ of sets of modal atoms}
    \STATE{$\mathbb{L}$ := $\emptyset$ }
    \STATE{$\mathcal{T}$ := $\tau(T  \cup \{ \neg \alpha \} )$ }
\FOR{ \textbf{each} $S \in  \mathbb{S}_T$}
      \IF{ $S$ is not a partial model of $\mathcal{T}$}
        \STATE{pick a partial structure $S_{min}$ from $min\_incons\_set(\mathcal{T},S)$}
        \STATE{ $\mathbb{L}$ := $\mathbb{L} \cup \{L^{S_{min}}\}$ }\label{lin:last}
    \ENDIF
\ENDFOR
\RETURN{$\mathbb{L}$}
\end{algorithmic}
\end{algorithm}
% ~

The algorithm is to be read as follows. A query $\alpha$ asked to theory $T$ is given as input. First (line 2) we 
%augment the theory $T$ by adding the inductive definition $\{q \leftarrow \alpha\}$ for a \emph{new} propositional variable $q$ that will represent the truth value of the query, and then 
translate theory $T$ and the negation of the query $\alpha$ into an augmented \foid theory $\mathcal{T}$. Next we iterate over the structures $S \in \mathbb{S}_T$ (lines 3-6).
Line 4 ensures that we limit ourselves to structures $S \in \mathbb{S}_T$ that are not a partial models of $\mathcal{T}$; note that the information in such a structure $S$ together with the information in %$\mathcal{T}$ 
$T$ entails the query $\alpha$. %, that is, the structure $S$ satisfies augmented theory becoming unsatisfiable by the introduction of the propositional variable $\neg q$.
%we define $S^s$ to be the structure that makes $q$ false and agrees with $S$ on every other symbol but $q$.
% We have that $S$ is a partial model of $\alpha$. Since $\alpha$ does not contain $q$. But since $S$ is not a partial model of $q$, $S$ can not model the augmented theory $\mathcal{T}^\prime$ considering that $\{q \leftarrow \alpha\} \in \mathcal{T}^\prime$.
Furthermore, note that for such a structure $S$, $min\_incons\_set(\mathcal{T}, S)$ is non-empty. So next (line 5), we pick a structure $S_{min}$ from $min\_incons\_set(\mathcal{T}, S)$; by definition $S_{min}$ is a minimal structure such that $S_{min} \leq_p S$ and $S_{min}$ is not a partial model of $\mathcal{T}$; this means that $S_{min}$ contains a minimal amount of information from $S$ that together with the information in %$\mathcal{T}$ 
$T$ ensures the query $\alpha$ to be true. So the set $L^{S_{min}}$, which represents the same information as a set of \say-literals, is a minimal set of \say-literals that together with the information in $T$ ensure the query $\alpha$ to be true.\footnote{Lemma \ref{lem:minimization} in Appendix B makes this claim more precise.}
Line 6 adds $L^{S_{min}}$ to the set of sets of \say-literals that we output at the end (line 7), after the iteration over the elements of $\mathbb{S}_T$ is completed.

% is minimal among the sets $L$ of \say-literals that make $\alpha$ true with respect to $T$
% use the function $min\_incons$ to compute the minimal structure $S_{min}$ such that $S_{min} \leq_p S$ and $S_{min}$ is not a partial model of $\mathcal{T}^\prime \cup \{\neg q\}$.
% The set $S_{min}$ contains the set of \say-literals that fix the truth value of the query to be true.\footnote{As shown in Lemma \ref{lem:minimization} in the Appendix.}
%Since we input of the function is the restricted structure $\left. S^\prime \right|_{A^{\says} \cup q}$ that contains only information about the $\says$ predicate and the variable $Q$, the output structure $S^u$ contains the set of \says-atoms that will fix the truth value of the query.
% Finally (line \ref{lin:last}) we iteratively translate and collect all the minimal sets of \say-literals that fix the truth value of the query. As last step we return the value of this set.

We continue Example \ref{ex:example} to illustrate the query minimization procedure.

\begin{example} \label{ex:query-min} We apply the Query Minimization Procedure to the theory $T_A$ and the query $z$.
First we translate the theory $T_A$ and the negation of the query into $\mathcal{T} = \tau(T_A \cup \{\neg z\})$, as shown in Example \ref{ex:example} with the addition of the formula $\neg z$, since the query does not contain any \say-atoms.
%with the addition of the of the inductive definition $\{q \leftarrow z\}$, since this inductive definition does not contain any occurrence of modal atoms. Let $\mathcal{T}_A^\prime$ be the augmented structure.
Then we iterate over the structures $S \in \mathbb{S}_{T_A}$.

Let, for example, $S$ be the element of $\mathbb{S}_{T_A}$ that makes $p^-_{B\_{\texttt{says}\_p}}$ and $p^-_{B\_{\texttt{says}\_r}}$ true and everything else undefined. Then $S$ is a not partial model of $\mathcal{T}$, because $p^-_{B\_{\texttt{says}\_p}}$ is inconsistent with $p^-_{B\_{\texttt{says}\_r}} \lor \neg p^+_{B\_{\texttt{says}\_r}} \Rightarrow z$ and $z$. Now $min\_incons\_set(\mathcal{T},S)$ is the set consisting only of the structure $S'$ that makes $p^-_{B\_{\texttt{says}\_p}}$ true and everyting else undefined. So in line 5, we necessarily pick $S_{min}$ to be this structure $S'$. In line 6 we calculate $L^{S'}$ to be $\{B \says r\}$ and add $\{B \says r\}$ to $\mathbb{L}$.

% a structure such that $S$ is a partial model of $p^-_{B\_{\texttt{says}\_p}}$ and $S$ is a partial model of $p^-_{B\_{\texttt{says}\_r}}$, we have $S \models q$.
% %Next we set the truth value of $nq$ to be false.
% We have $S$ is a not partial model of $\mathcal{T}_A^\prime \cup \{\neg q\}$ since the values of $S$ is a not partial model of $\neg q$ and $S \models z$ by construction. Now we search for a minimal inconsistent set in $S$. There are two choices, either remove $p^-_{B\_{\texttt{says}\_p}}$ from $S$ or remove $p^-_{B\_{\texttt{says}\_r}}$ from $S$, let $S^{-p}$ and $S^{-r}$ be these reduced structures respectively. We can easily check that $S^{-p}$ is a not partial model of $\neg q$  and $S^{-r}$ is a partial model of $\neg q $. Thus, the minimal set of atoms from structure $S$ that fix the truth value of $z$ is $p^-_{B\_{\texttt{says}\_r}}$. Finally we translate this result back to the \DACL formula $\{ B \says r \}$ and include it on the set $\mathbb{L}$.

When we iterate over all structures $S \in \mathbb{S}_{T_A}$, the value of $\mathbb{L}$ finally becomes 
$\{\{B \says r\}, \{B \says p, B \says z\},\{\neg B \says r\}\}$.
\end{example}

% ~\\
% \begin{lemma}Let $\mathcal{T}^\prime$ be as defined in Algorithm \ref{alg:query}. For every minimal set $A$ making $\mathcal{Q}$ true, there is a structure $S$ such that $S \models A$ and $A$ is the only minimal set such that $A \cup \{\neg q\} \cup \mathcal{T}^\prime$ is inconsistent.
% \end{lemma}
%
% \begin{definition} Let $\mathcal{T}$ be an \foid theory, let $S$ be a partial structure, let $A$ be a set of \foid formulae and let $\varphi$ be an \foid formula. We say that $A$ \emph{fixes} the truth value of $\varphi$ to be true relative to $\mathcal{T}$ and $S$ if and only if for every total structure $S^\prime \supseteq S$ such that $S^\prime \models \mathcal{T}$ and $s^\prime \models A$, we have $S^\prime \models \varphi$.
% \end{definition}
%
%
% \begin{theorem}\label{thm:thm1} Let $\mathcal{T}$ be an \foid theory, let $S$ be a partial structure. The set $\mathcal{A}$ obtained by Algorithm \ref{alg:query} is the set of all $A$ that satisfy the following two properties:
% \begin{itemize}  %- For every $A \in \mathcal{A}$ \\ %satisfies the following two properties:
%         \item $A$ fixes the truth value of $q$ to be true relative to $\mathcal{T}$ and $S$.
%         \item $A$ is a minimal (under set inclusion) set that defines the truth value of $q$ to be true relative to $\mathcal{T}$ and $S$.
% \end{itemize}
%
% \end{theorem}

\subsection{Communication and loop handling}

In this subsection we describe the Communication Procedure, which also takes care of the loop-handling. The Communication Procedure calls the Query Minimization Procedure and thereby constitutes our decision procedure for \DACL.

When a query is asked to a principal, the Query Minimization Procedure determines minimal sets of \say-literals that need to be satisfied in order to verify the query. The Communication Procedure then produces remote sub-queries to other principals that can determine the status of the \say-literals.

The Communication Procedure works by dynamically producing a \emph{query graph} and attaching three-valued truth values to the query vertices in it:

\begin{definition}
\label{def:query graph}
A \emph{query graph} is a labelled directed graph with two kinds of vertices and two kinds of edges:
 \begin{itemize}
  \item The first kind of vertices are the \emph{query vertices}. Each query vertex is labelled by a directed query of the form $\langle k:\phi \rangle$, where $k$ is the principal whose theory is being queried and $\phi$ is the formula representing the query. Additionally, a query vertex is potentially labelled by a truth value in $\{\Tr,\Fa,\Un\}$, which represents the currently active valuation of the query at any moment during the execution of the decision procedure.
  \item The second kind of vertices are the \emph{\say-literal set vertices}. Each \say-literal set vertex is labelled by a set of \say-literals, i.e.\ formulas of the form $k \says \phi$ or $\neg k \says \phi$.
  \item The first kind of edges are unlabelled edges going from a query vertex to a \say-literal set vertex. The intended meaning of such an unlabelled edge from $\langle k:\Q \rangle$ to the \say-literal set $L$ is that one way of making $\Q$ true in $k$'s theory is to make all \say-literals in $L$ true.
  \item The second kind of edged are edges labelled by $\Tr$ or $\Fa$, going from a \say-literal set vertex to a query vertex. The intended meaning of such an edge labelled by $\Tr$ or $\Fa$ and going from the \say-literal set $L$ to the query $\langle k:\Q \rangle$ is that $L$ contains the literal $ k \says \Q $ or the literal $\neg k \says \Q$ respectively.
 \end{itemize}
\end{definition}

The query graphs are actually always trees, with the query vertex corresponding to the original query as their root. %The reason why we define

The Communication Procedure starts with a query graph consisting just of the query vertex $\langle A:\alpha \rangle$, where $A$ is the principal to whom the primary query $\alpha$ is asked. Next the Communication Procedure calls the Query Minimization Procedure to add sub-queries to the query graph and attach truth values to them. This procedure is iteratively continued until a truth-value has been attached to the root vertex $\langle A:\alpha \rangle$.

The Communication Procedure is defined via an initialization procedure defined under Algorithm \ref{alg:init}, which calls the main recursive procedure defined under Algorithm~\ref{alg:comm}.

\begin{algorithm}[h!t]
\caption{\textsf{Communication Procedure Initialization} \label{alg:init}}
\begin{algorithmic}[1]
\REQUIRE{distributed theory $\T$, principal $A$, \DACL formula $\Q$}
\ENSURE{truth-value $V \in \{\Tr,\Fa,\Un\}$}
\STATE{$G$ := the labelled graph consisting only of a single vertex $v$ labelled $\langle A:\Q \rangle$ and no edges }
% \STATE{add a new query vertex $v$ labelled $A:\Q$ to $G$}
% \STATE{$k$ := $A$}
% \STATE{$\phi$ := $\Q$}
\STATE{$G$ := Communication\_{}Procedure($\T$,$G$,$v$)}
\STATE{$V$ := the label on the query vertex $\langle A:\Q \rangle$ in $G$}
\RETURN{$V$}
\end{algorithmic}
\end{algorithm}

Informally, the Communication Procedure can be explained as follows: The Query Minimization Procedure is called for $\T_A$ and $\alpha$. It returns a set of sets of \say-literals. For each such \say-literal set, we add a \say-literal set vertex connected to the root query vertex $\langle A:\alpha \rangle$ (lines 6-7). For each \say-literal in this set, we add a query vertex and an edge from the set vertex to this query vertex labelled by $\Tr$ or $\Fa$ depending on the sign of the \say-literal (8-15). We then apply Query Minimization Procedure and the rest of the procedure just explained to each new query vertex (line 22). At the same time, we label query vertices with truth values as follows: When all query vertices emerging from a \say-literal set vertex are labelled with the same truth value as the edge through which they are connected to the \say-literal set vertex, the query that produced that \say-literal set vertex is labelled $\Tr$ (lines 23-24). There is a dual procedure for labelling query vertices with $\Fa$ (lines 25-26). When a loop is detected, the query vertex causing the loop (by having the same label as a query vertex that is an ancestor of it) is labelled either with $\Fa$ or $\Un$, depending on whether the loop is over a negation (i.e. there is an $\Fa$-labelled edge in the path connecting the two vertices with the same label) or not (lines 16-20). $\Un$-labels  can also propagate towards the root of the graph (line 28).

\begin{algorithm}[h!t]
\caption{\textsf{Communication Procedure} \label{alg:comm}}
\begin{algorithmic}[1]
\REQUIRE{distributed theory $\T$, query graph $G$, query vertex $v$ of $G$,}
\ENSURE{updated query graph $G$}
\STATE{$k$ := the principal mentioned in the label of $v$}
\STATE{$\phi$ := the formula mentioned in the label of $v$}
\STATE{$\LL$ := \textsf{Query\_{}Minimization\_{}Procedure}($\T_k$,$\phi$)}
\WHILE{the input query vertex $v$ does not have a truth-value attached to it}
  \FOR{$L \in \LL$}
    \STATE{add a new \say-literal set vertex $L$ to $G$}
    \STATE{add to $G$ a new edge from vertex $v$ to vertex $L$}
%     \WHILE{there is no edge labelled $\Tr$ from set vertex $L$ to a query vertex labelled $\Tr$}
      \FOR{$l \in L$}
        \STATE{$k'$ := the principal such that $l$ is of the form $k' \says \psi$ or $\neg k' \says \psi$}
        \STATE{$\psi$ := the formula such that $l$ is of the form $k' \says \psi$ or $\neg k' \says \psi$}
        \STATE{add a query vertex $v'$ labelled by $\langle k':\psi \rangle$ to $G$}
        \IF{$l$ is $k' \says \psi$}
          \STATE{add to $G$ a new edge labelled $\Tr$ from vertex $L$ to vertex $\langle k':\psi \rangle$}
        \ENDIF
        \IF{$l$ is $\neg k' \says \psi$}
          \STATE{add to $G$ a new edge labelled $\Fa$ from vertex $L$ to vertex $\langle k':\psi \rangle$}
        \ENDIF
        \IF{a query vertex $v''$ that is an ancestor of $v'$ is also labelled $\langle k':\psi \rangle$}
          \IF{all labelled edges between $v''$ and $v'$ are labelled by $\Tr$}
            \STATE{add $\Fa$-label to $v'$}
          \ELSE
	    \STATE{add $\Un$-label to $v'$}
          \ENDIF
        \ELSE
	  \STATE{Communication\_{}Procedure($\T$,$G$,$v'$)}
        \ENDIF
      \ENDFOR
%     \ENDWHILE
      \IF{every query vertex $v'$ such that there is an edge from $L$ to $v'$ is labelled with the same truth value as this edge}
        \STATE{label $v$ with $\Tr$}
      \ENDIF
  \ENDFOR
  \IF{for every \say literal set vertex $L$ such that there is an edge from $v$ to $L$, there is a query vertex $v'$ such that there is an edge from $L$ to $v'$ labelled with the opposite truth value as $v'$}
    \STATE{label $v$ with $\Fa$}
  \ELSE
    \STATE{label $v$ with $\Un$}
  \ENDIF
\ENDWHILE
\RETURN{$G$}
\end{algorithmic}
\end{algorithm}

We continue Example \ref{ex:example} to illustrate the Communication Procedure.

\begin{example} Given the distributed theory $\T = \{T_A, T_B, T_C\}$, we query the principal $A$ about the truth value of $z$. We show the final graph in Figure \ref{fig:graph} and now explain its construction.
We start by calling the Communication Initialization Procedure; this generates a graph $G$ with the vertex $v =\langle A:z \rangle$  (with no associated truth value label). Then we call the Communication Procedure with arguments $\T$, $G$ and $v$, chich means that we must call the Query Minimization Procedure, which returns the set $\mathbb{L} = \{\{B \says p, B \says z\}, \{B \says r\}, \{\neg B \says r\}\}$ as shown in Example \ref{ex:query-min}. Since the input vertex $v$ has no truth-value associated to it, we next iterate over the sets $L \in \mathbb{L}$.
The \say-literal set vertex $\{B \says p, B \says z\}$ is added to $G$ with its corresponding edge.
Now we consider each \say-literal in the vertex. 

(i) For literal $B \says p$ generate a new query vertex $v^\prime = \langle B: p \rangle$ with an edge labelled $\mathbf{t}$ (since the literal is not negated) and recursively call the Communication Procedure with the updated graph as argument and vertex $v^\prime$. $v^\prime$ has no truth-value associated, the Query Minimization Procedure returns the set $\mathbb{L}^\prime = \{\{\}\}$; so after adding the \say-literal set vertex $\{\}$ connected to $v^\prime$, the truth-value $\mathbf{t}$ is assigned to $v^\prime$ by lines 21-22 of the Communication Procedure (this corresponds to the intuitive idea that $\mathbb{L}^\prime = \{\{\}\}$ means that $p$ is true in $T_B$).

(ii) For literal $B \says z$ generate a new query vertex $v^\prime = \langle B: z \rangle$ with an edge labelled $\mathbf{t}$ and recursively call the Communication Procedure with the updated graph as argument and vertex $v^\prime$. The Query Minimization Procedure is called returning the set $\mathbb{L} = \{\{C \says z\}\}$; for this literal we generate a new query vertex $v^{\prime\prime} = \langle C: z \rangle$ with an edge labelled $\mathbf{t}$. In turn, the Query Minimization Procedure is called returning the set $\mathbb{L}^{\prime\prime}= \{\{\neg B \says z\}\}$; for this literal we generate a new query vertex $v^{\prime\prime\prime} = \langle B: z \rangle$ with an edge labelled $\mathbf{f}$. 
At this point we detect a loop, as the query vertex $v^\prime$ that is an ancestor of $v^{\prime\prime\prime}$ is also labelled by $\langle B: z \rangle$. Since the loop contains an edge with label $\mathbf{f}$, the truth-value assignment for $v^{\prime\prime\prime}$ is  $\mathbf{u}$. This truth-value $\mathbf{u}$ is propagated up to label the query vertices $v^{\prime\prime}$ and $v^{\prime}$, since $\mathbf{u}$ does not match with neither $\mathbf{t}$ nor $\mathbf{f}$.

Finally, the truth-values for (i) matches the labeled edge, but not for the case of (ii). Thus we cannot yet label the root vertex with $\Tr$, and continue with the next \say-literal $\{B \says r\} \in \mathbb{L}$.

For vertex $\{B \says r\}$, we repeat the procedure as described above until we (again) detect a loop. This loop does not contain edges with label $\mathbf{f}$, so the truth-value assignment for the vertex at which the loop is detected is $\mathbf{f}$.
Again this truth-value is propagated to label the two query vertices above this vertex, as the labelled edges are labelled by $\mathbf{t}$. Since the truth-value $\Fa$ assigned to the query vertex $\langle B:r\rangle$ does not match the truth value of the labelled edge above it, the root vertex can still not be labelled with $\Tr$.

The subgraph produced below the final \say-literal set vertex $\{\neg B \says r\}$ is the same as below \say-literal set vertex $\{B \says r\}$, only that the labelled edge directly below this \say-literal set vertex is now labelled $\Fa$ instead of $\Tr$. So this time the label on the query vertex $\langle B:r\rangle$ matches the label on the labelled edge, to that the root vertex is labelled $\Tr$. This ends the main while loop and therefore the Communication Procedure. Finally, the Communication Procedure Initialization returns the output $\Tr$.

\begin{figure}[ht!] 
\centering
\begin{tikzpicture}[->,>=stealth', auto, node distance=2cm and 2cm, main node/.style={}]
%  \node[main node] () at (-2.5,0){\textcolor{white}{.}};
%  \node[main node] () at (8,0)   {\textcolor{white}{.}};
%  \node[main node] () at (-2.5,7){\textcolor{white}{.}};
%  \node[main node] () at (8,7)   {\textcolor{white}{.}};
  \node[main node] (Az) at (3,7) {$\langle A : z \rangle , \mathbf{t}$};
  \node[main node] (Bspz) at (0.5,6) {$\{B ~says~ p; B ~says~ z\}$};
  \node[main node] (Bsr) at (3.5,6) {$\{B ~says~ r\}$};
  \node[main node] (nBsr) at (6,6) {$\{\neg B ~says~ r\}$};
  \node[main node] (Bp) at (-0.5,5) {$\langle B : p \rangle , \mathbf{t}$};
  \node[main node] (Bsp) at (-0.5,4) {$\{\}$};
  \node[main node] (Bz1) at (1,5) {$\langle B : z \rangle ,  \mathbf{u}$};
  \node[main node] (Csz) at (1,4) {$\{C ~says~ z\}$};
  \node[main node] (Cz) at (1,3) {$\langle C : z \rangle , \mathbf{u}$};
  \node[main node] (Bsz) at (1,2) {$\{\neg B ~says~ z\}$};
  \node[main node] (Bz2) at (1,0.7) {\begin{tabular}{c} $\langle B : z \rangle ,  \mathbf{u}$ ~ loop! \\ (over negation)\end{tabular}};
  \node[main node] (Br1) at (3.5,5) {$\langle B : r \rangle ,  \mathbf{f} $};
  \node[main node] (Csr) at (3.5,4) {$\{C ~says~ r\}$};
  \node[main node] (Cr) at (3.5,3) {$\langle C : r \rangle , \mathbf{f} $};
  \node[main node] (Bsr2) at (3.5,2) {$\{B ~says~ r\}$};
  \node[main node] (Br2) at (3.5,0.7) {\begin{tabular}{c} $\langle B : r \rangle , \mathbf{f}$ ~ loop! \end{tabular}};
  \node[main node] (Br3) at (6,5) {$\langle B : r \rangle , \mathbf{f} $};
  \node[main node] (Csr2) at (6,4) {$\{C ~says~ r\}$};
  \node[main node] (Cr2) at (6,3) {$\langle C : r \rangle , \mathbf{f} $};
  \node[main node] (Bsr3) at (6,2) {$\{B ~says~ r\}$};
  \node[main node] (Br4) at (6,0.7) {\begin{tabular}{c} $\langle B : r \rangle , \mathbf{f}$ ~ loop! \end{tabular}};

  \path[every node/.style={font=\sffamily\normalsize},pos=0.5, auto]
    (Az) edge   [bend right=20] node [above] {} (Bspz)
    (Az) edge   [bend left=20]  node [above] {} (Bsr)
    (Az) edge   [bend left=20]  node [above] {} (nBsr)
    (Bspz) edge  node [right] {$\mathbf{t}$} (Bp)
    (Bp) edge    node [right] {} (Bsp)
    (Bspz) edge  node [right] {$\mathbf{t}$} (Bz1)
    (Bz1) edge   node [above] {} (Csz)
    (Csz) edge   node [right] {$\mathbf{t}$} (Cz)
    (Cz) edge    node [above] {} (Bsz)
    (Bsz) edge   node [right] {$\mathbf{f}$} (Bz2)
    (Bsr) edge   node [right] {$\mathbf{t}$} (Br1)
    (Br1) edge   node [above] {} (Csr)
    (Csr) edge   node [right] {$\mathbf{t}$} (Cr)
    (Cr) edge    node [above] {} (Bsr2)
    (Bsr2) edge  node [right] {$\mathbf{t}$} (Br2)
    (nBsr) edge  node [right] {$\mathbf{f}$} (Br3)
    (Br3) edge   node [above] {} (Csr2)
    (Csr2) edge  node [right] {$\mathbf{t}$} (Cr2)
    (Cr2) edge   node [above] {} (Bsr3)
    (Bsr3) edge  node [right] {$\mathbf{t}$} (Br4)
  ;
\end{tikzpicture}
\caption{Query Graph}
\label{fig:graph}
\end{figure}
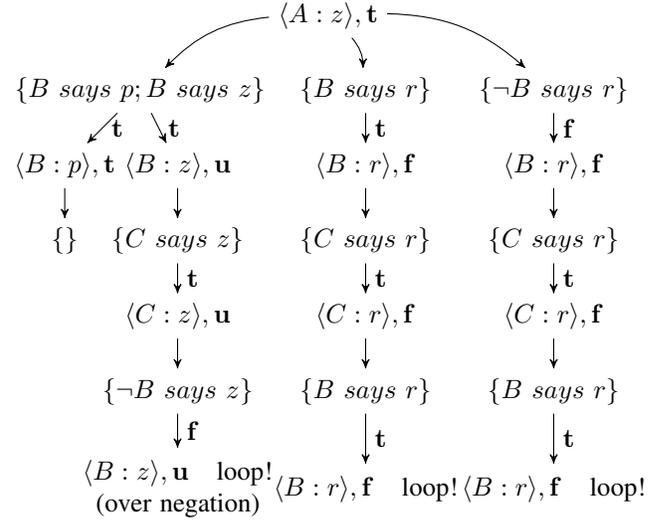

\end{example}

\subsection{Correctness of decision procedure}
\label{sec:relation}

The following theorem states that the result of the decision procedure is always in line with the well-founded semantics of \DACL:

\begin{theorem}
\label{thm:correspondence}
 Let $\T$ be a distributed theory, let $A$ be an agent, and let $\Q$ be a \DACL formula. When $A$'s theory $\T_A$ is queried about $\Q$, the decision procedure returns $(A \says \Q)^{\wfm(\T)}$, i.e.\ the truth value of $A \says \Q$ in the well-founded model of $\T$.
\end{theorem}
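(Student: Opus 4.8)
The plan is to prove Theorem~\ref{thm:correspondence} by establishing a tight correspondence between the recursive structure of the Communication Procedure and the fixpoint construction underlying the well-founded model $\wfm(\T)$. Recall that $\wfm(\T)$ is the $\leq_p$-least distributed belief pair $\ubp = (\ubp^c,\ubp^l)$ with $S_\T(\ubp^c) = \ubp^l$ and $S_\T(\ubp^l) = \ubp^c$, and that the stable operator $S_\T$ is itself defined via a least-fixpoint iteration of $\ubprevision$. Since for \say-literals $\phi$ the value $\phi^{\wfm(\T)}$ is structure-independent, the target quantity $(A \says \Q)^{\wfm(\T)}$ is: $\Tr$ iff $\Q^{\wfm(\T),I} = \Tr$ for every $I \in \wfm(\T)^c_A$, $\Fa$ iff $\Q^{\wfm(\T),I'} = \Fa$ for some $I' \in \wfm(\T)^l_A$, and $\Un$ otherwise. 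The first step is therefore to reformulate this in a ``query-oriented'' way: I would define, for each directed query $\langle k : \psi \rangle$, its \emph{intended value} as $(k \says \psi)^{\wfm(\T)}$, and show that the labels the Communication Procedure attaches to query vertices are exactly these intended values — the theorem is then the special case of the root vertex $\langle A : \Q \rangle$.

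The key technical bridge is the Query Minimization Procedure. First I would invoke Lemma~\ref{lem:minimization} (from Appendix~B, which I may assume) to obtain the precise guarantee: for a theory $T$ and query $\psi$, the output $\mathbb{L} = \textsf{Query\_Minimization\_Procedure}(T,\psi)$ has the property that for any set $L$ of \say-literals, ``$T$ together with the \say-literals in $L$ entails $\psi$'' holds iff $L$ is a superset of some $L' \in \mathbb{L}$; moreover each $L' \in \mathbb{L}$ is minimal with this property. The crucial observation is that this syntactic ``entailment given \say-literal assumptions'' matches the three-valued valuation machinery of Definitions~\ref{def:3valform}--\ref{def:3valtheo}: fixing the truth values of all directly-occurring \say-atoms in $T$ to a coherent three-valued assignment (i.e., picking a belief pair $\ubp$ restricted to the relevant \say-atoms) determines $(\T_k)^{\ubp,I}$, and hence which structures $I$ survive in $\ubprevision^c(\ubp)_k$ resp.\ $\ubprevision^l(\ubp)_k$; so $(k \says \psi)$ evaluates to $\Tr$/$\Fa$/$\Un$ under $\ubp$ according to whether the \say-literals true under $\ubp$ cover some $L' \in \mathbb{L}$, cover the negation-closure forcing failure, or neither. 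I would make this precise as an intermediate lemma relating $\mathbb{L}$ to the local contribution of $\T_k$ to the operator $\ubprevision$.

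With that bridge in place, the heart of the proof is an induction that mirrors the two nested fixpoint computations. I would structure it as: (outer) an induction on the stages of the least-fixpoint iteration computing $\wfm(\T)$ — or equivalently a well-founded induction exploiting $\leq_p$-minimality — showing that once the procedure terminates, the root label equals the intended value; (inner) within this, an induction on the structure of the query tree, using the Communication Procedure's propagation rules (lines 23--28 of Algorithm~\ref{alg:comm}) to show that a query vertex is labelled $\Tr$ exactly when \emph{every} \say-literal set child has all its literal-children confirmed with matching truth values — which by the bridge lemma corresponds to the conservative bound forcing $\psi$ true — and dually for $\Fa$ via the liberal bound, with $\Un$ in the remaining case. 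The loop-handling (lines 16--20) is where the well-founded semantics genuinely enters: a loop not passing through an $\Fa$-edge (positive loop) gets label $\Fa$, matching the fact that the least-fixpoint construction of $S_\T$ starting from ``everything unknown/false'' does not support an unfounded positive cycle; a loop through an $\Fa$-edge gets $\Un$, matching the fact that such a cycle is left undefined by the well-founded semantics. I would argue this by identifying the ancestor-path in the query tree with a dependency cycle among \say-atoms and citing the behaviour of $\wfm$ on such cycles (the same phenomenon as ``$p \leftarrow \lnot p$ yields $\lunkn$'' versus ``$p \leftarrow p$ yields $\lfalse$'' in well-founded semantics for inductive definitions).

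The main obstacle I expect is exactly the loop-handling case: proving that the \emph{local} cycle-detection heuristic in Algorithm~\ref{alg:comm} (comparing a vertex label to that of a tree-ancestor, and inspecting edge labels along the connecting path) faithfully reproduces the \emph{global} two-valued-vs-undefined verdict of the interlocking $S_\T$-fixpoints. This requires showing that no relevant dependency is missed — i.e., that cutting the tree at the first repeated $\langle k : \psi \rangle$ loses no information because any longer elaboration would be an isomorphic copy — and that the $\Un$ assigned on negative loops really propagates to exactly the vertices the well-founded semantics leaves undefined, neither more nor less (in particular, that $\Un$ does not spuriously block a $\Fa$-label that the well-founded model would assign). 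Termination of the Communication Procedure must also be established as part of this, arguing that the query tree is finite because there are finitely many distinct \say-atoms and every root-to-leaf path is cut at the first repetition. A secondary obstacle is bookkeeping: carefully matching the three-way case split ($\Tr/\Fa/\Un$) of the procedure's labelling rules with the three-way split in the definition of $(t \says \varphi)^{\ubp,I}$ and with the $\leq_p$-minimality condition defining $\wfm(\T)$, so that the induction hypothesis is strong enough to push through all three cases simultaneously.
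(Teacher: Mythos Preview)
Your proposal takes a genuinely different route from the paper's proof. The paper does \emph{not} attempt a direct induction matching the Communication Procedure against the fixpoint iteration. Instead, it introduces an auxiliary \emph{$S_\T$-based decision procedure} (Algorithm~\ref{alg:ST-based}) that first builds the entire query graph and only \emph{afterwards} iteratively updates the truth-value labelling of all query vertices in a way that is engineered to mirror, step for step, the iteration $(S_\T)^n(\bot,\top)$. The theorem is then split into (i) equivalence of the original Communication Procedure with this auxiliary procedure, which reduces to comparing the two loop-handling mechanisms, and (ii) correctness of the auxiliary procedure, which follows from a lemma (Lemma~2) identifying one pass of lines 24--34 of Algorithm~\ref{alg:ST-based} with one application of $S_\T$, via an explicit correspondence between distributed belief pairs and labellings of the query graph.

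What the paper's indirection buys is precisely a clean separation between the structural part (graph construction) and the semantic part (fixpoint computation): because Algorithm~\ref{alg:ST-based} relabels the \emph{whole} graph in each outer iteration, there is an obvious invariant ``after $n$ iterations the labelling corresponds to $(S_\T)^n(\bot,\top)$''. Your proposed nested induction --- outer on fixpoint stages, inner on tree structure --- lacks such an alignment: the Communication Procedure builds and labels the tree in a single depth-first recursive pass, so there is no moment in its execution at which the partial labelling corresponds to an intermediate stage $(S_\T)^n(\bot,\top)$. You correctly flag the loop-handling as the main obstacle, but the deeper difficulty your plan does not address is how to formulate an induction hypothesis that simultaneously tracks the recursion depth of Algorithm~\ref{alg:comm} and the approximation stage of $\wfm(\T)$. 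The paper's auxiliary procedure is exactly the device that dissolves this mismatch; without something playing that role, your ``outer induction on fixpoint stages'' has no foothold in the actual control flow of the procedure you are analysing.
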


The proof of Theorem \ref{thm:correspondence} can be found in the Appendix.

\subsection{Complexity of the decision procedure} \label{sec:complexity}
%The avoidance of redundant information flow in the decision procedure is computationally expensive: 

The Query Minimization Procedure has a worst-case runtime that is exponential in the maximum of the number of different \say-atoms in $T$, the size of the vocabulary $\Sigma$ and the size of the domain $D$: Its \textbf{for}-loop has $3^n$ iterations, where $n$ is the number of different \say-atoms in $T$, and as min\_{}incons has a worst-case runtime exponential in the maximum of the size of the vocabulary $\Sigma$ and the size of the domain $D$. On the other hand, if we count each call to the Query Minimization Procedure as one step, the communication and loop-handling has runtime quasilinear in the number of subqueries called.

To make the decision procedure practically applicable, heuristics would have to employed to reduce the runtime for determining an access right, and a principled approach for dealing with situations when access cannot be determined within a reasonable amount of time would be required (see Section VII of Cramer et al.\ \cite{NP-complete} for an example of such an approach in a somewhat different access-control setting). One modification of the decision procedure that reduces the expected runtime, even though it does not reduce the worst-case runtime, is to not calculate the whole of $\mathbb{L}$ immediately in the Query Minimization Procedure, but to instead first calculate just one $L \in \mathbb{L}$, then do the communication necessary for determining whether this $L$ actually makes the query true, and continue with the step-wise calculation of $\mathbb{L}$ only if the query has not yet been determined true. 

\section{Related Work}
\label{sec:related}
Most access control logics proposed in the literature have been defined in a proof-theoretical way, i.e. by specifying which axioms and inference rules they satisfy. This contrasts with Van Hertum et al.'s \cite{ijcai16Cramer} approach of defining \DACL semantically rather than proof-theoretically. This difference means that the tasks of defining decision procedures for these access control logics involve very different technical machinery.

% Our main motivation for defining \DACL semantically is that autoepistemic logic is usually defined semantically, and that it is difficult to proof-theroetically characterize autoepistemic logic with the well-founded semantics. We have already motivated the application of autoepistemic logic to access control in section \ref{sec:autoepistemic}, and the use of the well-founded semantics in section \ref{sec:use-cases}.

Garg and Abadi \cite{Garg2008,Garg08principal-centricreasoning} and Genovese \cite{Genovese12} have defined Kripke semantics for many of the access control logics that were previously defined proof-theoretically in the literature. They introduced these Kripke semantics as a tool for defining decision procedures for those access control logics. Genovese \cite{Genovese12} follows the methodology of Negri and von Plato \cite{DBLP:books/daglib/0005072,DBLP:books/daglib/0036978} of using a Kripke semantics of a modal logic to define Labelled Sequent Calculus, which forms the basis of a decision procedure for the logic.

Denecker et al.\ \cite{DeneckerMT03} have defined a procedure for computing the well-founded model of an autoepistemic theory. This procedure might be extendable to a procedure for computing the well-founded model of \DACL. However, such an extension of their procedure would not have the feature of minimizing the communication between principals, and thus violate the need-to-know principle and cause privacy concerns (see Section \ref{sec:minimization_motivation}).

% Since the well-founded semantics of \DACL is defined using Approximation Fixpoint Theory, which can also be used in an analogous way to define the well-founded semantics of logic programs \citeyear{DeneckerMT03}, it should be pointed out that decision procedures for the 

\section{Conclusion and Future Work}
\label{sec:future}
In this paper, we have defined a query-based decision procedure for the well-founded semantics of \DACL. When applying \DACL to access control, this decision procedure allows one to determine access rights while avoiding redundant information flow between principals in order to enhance security and reduce privacy concerns.

Given that our decision procedure has in the worst case an exponential runtime (see Section \ref{sec:complexity}), a more efficient decision procedure will have to be developed for \DACL or an expressively rich subset of it in order to apply it in practice. 
%One of the reasons why our decision procedure is so inefficient is that the avoidance of redundant information flow is very expensive. Instead of avoiding redundant information flow in an absolute way as our decision procedure does, a practically applicable decision procedure would have to find a compromise between computational cost and limiting information flow. Nevertheless, we consider our ideal avoidance of redundant information flow an interesting proof of concept as a foundation for further research. 
For this reason, the contribution of this paper is mainly of a conceptual nature: The defined decision procedure is a proof of concept that increases our understanding of dAEL(ID) by providing an algorithmic characterization of the well-founded semantics of dAEL(ID). This algorithmic characterization complements in a conceptually fruitful way the semantic definition from Van Hertum et al.\ \cite{ijcai16Cramer} which is based on a fixpoint construction on abstract structures.

Our decision procedure aims at proving a query in terms of queries to other principals. In this process, it cautiously handles possible loops between such queries. This is highly reminiscent of the way \emph{justifications} are defined, for instance for logic programs \cite{lpnmr/DeneckerBS15}. Hence it may be interesting to define justification semantics for \DACL.

% Another way to characterize \DACL would be through an axiomatic proof system. Most other access control logics are defined in this way, hence this would make \DACL more straightforwardly comparable to other access control logics. To the best of our knowledge, no proof system has been developed for autoepistemic logic with the well-founded semantics.

% The possibility of defining an intuitionistic variant of \DACL should be studied. Furthermore, it is worthwhile to study the possibility of adding features of other state-of-the-art access control logics to \DACL, e.g. the support for explicit time and system state incorporated in BL \cite{Garg09,Garg12}. 

\bibliographystyle{IEEEtran}
\bibliography{references}

\appendix

\section{Proof of Theorem 1}

The following lemma states that the Query Minimization Procedure (Algorithm \ref{alg:query}) really does what it is supposed to do:

\begin{lemma}
\label{lem:minimization}
 Let $T$ be a \DACL theory and let $\alpha$ be a \DACL formula. The set $\mathbb{L}$ returned by \textsf{Query\_{}Minimization\_{}Procedure}$(T,\alpha)$ is
 \begin{align*}
 \{L \;|\; &L \text{ is minimal (under set inclusion) among the sets $L'$ of}\\
 &\text{\say-literals that make $\alpha$ true with respect to $T$}\}
 \end{align*}
\end{lemma}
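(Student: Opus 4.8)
The plan is to reduce the lemma to a purely order-theoretic statement about the partial structures in $\mathbb{S}_T$. First I would make the informal phrase ``$L$ makes $\alpha$ true with respect to $T$'' precise. For a consistent set $L$ of \say-literals whose \say-atoms all occur in $\tau(T)$, let $S_L$ be the partial structure over $\Sigma'$ that assigns $\Tr$ to $p^-_{A\_{\texttt{says}\_\varphi}}$ for each $A \says \varphi \in L$, assigns $\Fa$ to $p^+_{A\_{\texttt{says}\_\varphi}}$ for each $\neg A \says \varphi \in L$, and leaves every other symbol $\Un$. I would then define: $L$ \emph{makes $\alpha$ true with respect to $T$} iff there is no total structure $M \geq_p S_L$ with $M \models \tau(T)$ and $M \not\models \alpha$; equivalently, iff $S_L$ is not a partial model of $\mathcal{T} := \tau(T \cup \{\neg\alpha\})$. (Throughout, I read the lemma's quantification over \say-literal sets as ranging over sets that are consistent and mention only \say-atoms occurring in $\tau(T)$; this is the reading under which each \say-atom receives a single three-valued value, as modelled by $\mathbb{S}_T$.) This is the faithful rendering of ``the information in $L$, together with $T$, forces $\alpha$''; that it is faithful rests on the correctness of $\tau$ — in particular on the polarity bookkeeping ($p^+$ for positive, $p^-$ for negative occurrences) and on the two implications $p^-_{A\_{\texttt{says}\_\varphi}} \Rightarrow w_{A\_{\texttt{says}\_\varphi}} \Rightarrow p^+_{A\_{\texttt{says}\_\varphi}}$ added for \say-atoms in inductive-definition bodies — which I would either cite or isolate as a short auxiliary lemma.

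Second, I would prove that $L \mapsto S_L$ is an order isomorphism from the consistent \say-literal sets over the \say-atoms occurring in $\tau(T)$, ordered by $\subseteq$, onto $(\mathbb{S}_T, \leq_p)$, with inverse $S \mapsto L^S$. The checks are routine: $S_L$ satisfies the four defining conditions of $\mathbb{S}_T$ (consistency of $L$ is exactly what rules out $p^+_{A\_{\texttt{says}\_\varphi}} = \Fa$ together with $p^-_{A\_{\texttt{says}\_\varphi}} = \Tr$); $L^{S_L} = L$, and $S_{L^S} = S$ because the conditions on $\mathbb{S}_T$ force $p^+_{A\_{\texttt{says}\_\varphi}} = \Un$ whenever $p^-_{A\_{\texttt{says}\_\varphi}} = \Tr$ and dually; and $L_1 \subseteq L_2$ iff $S_{L_1} \leq_p S_{L_2}$.

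Third, I would observe that if $S \in \mathbb{S}_T$ then every partial structure $S' \leq_p S$ again lies in $\mathbb{S}_T$ (it can only differ from $S$ by leaving some relevant $p^{\pm}$-variables undefined, and it inherits all four coherence conditions). With this, Algorithm~\ref{alg:query} can be read off directly: the \textbf{for}-loop of line~3 ranges, via the isomorphism, over all consistent \say-literal sets $L$; line~4 retains exactly those $S = S_L$ that are not partial models of $\mathcal{T}$, i.e.\ exactly those $L$ that make $\alpha$ true with respect to $T$; and for such $S_L$, the set $min\_incons\_set(\mathcal{T}, S_L)$ consists of the $\leq_p$-minimal partial structures $\leq_p S_L$ that are not partial models of $\mathcal{T}$, which by the previous observation and the isomorphism are exactly the $S_{L'}$ with $L' \subseteq L$ and $L'$ making $\alpha$ true. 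Hence each set $L^{S_{min}}$ added to $\mathbb{L}$ (line~6) is a $\subseteq$-minimal \say-literal set that makes $\alpha$ true. Conversely, if $L_0$ is itself $\subseteq$-minimal among sets making $\alpha$ true, then by minimality no $S' <_p S_{L_0}$ can fail to be a partial model of $\mathcal{T}$, so $min\_incons\_set(\mathcal{T}, S_{L_0}) = \{S_{L_0}\}$ and the iteration at $S = S_{L_0}$ adds $L_0$ itself. Therefore $\mathbb{L}$ is exactly the set of $\subseteq$-minimal \say-literal sets that make $\alpha$ true with respect to $T$.

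The step I expect to be the main obstacle is the first one: pinning down ``makes $\alpha$ true with respect to $T$'' and verifying that the partial-model test on $\tau(T \cup \{\neg\alpha\})$ is genuinely faithful to it. This is where one must check that the coherence constraints of $\mathbb{S}_T$, together with the implications added for defined \say-atoms, exclude the spurious total extensions of $S_L$ that would otherwise make the test too weak, and that positive and negative occurrences of a \say-atom are handled consistently between $T$ and $\neg\alpha$. Once this faithfulness property is in place (or taken over from the correctness of the translation), the remaining argument is the order-isomorphism bookkeeping sketched above.
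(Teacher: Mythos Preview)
The paper does not actually prove Lemma~\ref{lem:minimization}: in the appendix it is stated as a building block for the proof of Theorem~\ref{thm:correspondence} and then the text moves straight on to the $S_\T$-based decision procedure. So there is no proof in the paper to compare your proposal against.

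On its own merits your argument is sound. The key moves --- identifying $\mathbb{S}_T$ with the consistent \say-literal sets via $L \mapsto S_L$, observing that $\mathbb{S}_T$ is downward closed under $\leq_p$, and then reading lines~3--6 of Algorithm~\ref{alg:query} through this isomorphism --- are exactly what is needed, and your two-inclusion argument is correct: any $S_{min}$ picked in line~5 is globally $\leq_p$-minimal (not just minimal below the current $S$) because anything strictly below it would also lie below $S$; and every minimal $L_0$ is hit at the iteration $S = S_{L_0}$ since $min\_incons\_set(\mathcal{T},S_{L_0})$ is then the singleton $\{S_{L_0}\}$. Your explicit restriction of the quantifier to consistent \say-literal sets over \say-atoms occurring in $\tau(T)$ is the right reading, and is in fact forced: the paper never makes ``makes $\alpha$ true with respect to $T$'' precise, and your partial-model formulation is the only interpretation compatible with how $\mathbb{L}$ is used downstream in the Communication Procedure.

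The one genuinely unfinished piece is the one you flag yourself: the faithfulness of $\tau$ at the level of partial structures. This is not just bookkeeping. You need that for $S \in \mathbb{S}_T$, being a non-partial-model of $\tau(T \cup \{\neg\alpha\})$ coincides with the three-valued semantic statement ``under any distributed belief pair compatible with $L^S$, $T$ forces $\alpha$''. The polarity split $p^+/p^-$ buys you monotonicity in the right direction for \say-atoms in formulas, but for \say-atoms inside inductive-definition bodies the added implications $p^- \Rightarrow w \Rightarrow p^+$ only constrain $w$ loosely, and the well-founded semantics of the definition is sensitive to the actual value of $w$, not just its bounds. You should check (or state as a separate lemma) that the \texttt{sat}/\texttt{unsatstructure} test over total extensions of $S_L$ --- which lets $w$ range freely between its bounds --- agrees with Definition~\ref{def:3valID} evaluated at the belief pair encoded by $L$. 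The paper asserts this implicitly but never argues it.
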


% \begin{proof}
%  For proving this lemma, we need to show that $L \in \mathbb{L}$ iff $L$ is a minimal set that makes $\alpha$ true with respect to $T$.
% 
%  First suppose $L \in \mathbb{L}$.
%  %TODO after submission: Explain this step a bit more.
%  %Then $L=\beta(S_{min}[A^{says}])$, where $S_{min} =$ min\_incons($\mathcal{T}^\prime$, $\left. S^s \right|_{A^{\says} \cup \ q}$, $\{q\}$)
%  Then by the definition of Algorithm \ref{alg:query}, $\beta$ and min\_{}incons, $L$ is a minimal set such that $\mathcal{T}^\prime \cup \{p^-_{A\_\says\_\phi} \;|\; (A \says \phi) \in L\} \cup \{\neg p^+_{A\_\says\_\phi} \;|\; (\neg A \says \phi) \in L\}  \cup \{q\} \models \bot$. Hence $L$ is a minimal set such that $\mathcal{T} \cup \{p^-_{A\_\says\_\phi} \;|\; (A \says \phi) \in L\} \cup \{\neg p^+_{A\_\says\_\phi} \;|\; (\neg A \says \phi) \in L\} \models \alpha$. So by Definitions \ref{def:1}, \ref{def:3} and \ref{def:4} above, $L$ is a minimal set that makes $\alpha$ true with respect to $T$, as required.
% 
%  Conversely, suppose $L$ is a minimal set that makes $\alpha$ true with respect to $T$. Define $S$ to be the partial structure that makes $q$ true, makes $p^-_{A\_\says\_\phi}$ true whenever $(A \says \phi) \in L$, that makes $p^+_{A\_\says\_\phi}$ false whenever $(\neg A \says \phi) \in L$, and that is undefined on all other symbols.
%  %TODO after submission: Explain this step a bit more.
%  Then applying lines 4-6 of Algorithm \ref{alg:query} to this structure $S$ leads to $L$ being added to $\mathbb{L}$, as required.
% 
% \end{proof}

Let $\bot$ denote the DPWS in which every agent's possible world structure is the set of all structures over domain $D$ and vocabulary $\Sigma$. Let $\top$ denote the DPWS in which every agent's possible world structure is the empty set.

The well-founded model of $\T$ is the $\leq_p$-least fixpoint of $S_\T$. When the domain is finite, as we are assuming when applying the decision procedure, there is a natural number $n$ such that $\wfm(\T) = (S_\T)^n(\bot,\top)$. In other words, the well-founded model can be computed by a finite number of application of $S_\T$ to $(\bot,\top)$, until a fixpoint is reached.

The steps in the decision procedure defined in section \ref{sec:query} do not directly correspond to the steps in the computation of the well-founded model by a finite number of application of $S_\T$ to $(\bot,\top)$. In order to prove that the two computations nevertheless always yield the same result, we first define a decision procedure that resembles the decision procedure defined in section \ref{sec:query}, but whose steps correspond more directly to the iterative application of $S_\T$ to $(\bot,\top)$. We call this auxiliary decision procedure the \emph{$S_\T$-based decision procedure}. So we prove Theorem \ref{thm:correspondence} by proving two things:
\begin{itemize}
 \item The decision procedure defined in section \ref{sec:query} is equivalent to the $S_\T$-based decision procedure.
 \item  When $A$'s theory $\T_A$ is queried about $\Q$, the $S_\T$-based decision procedure returns \textsf{yes} iff $(A \says \Q)^{\wfm(\T)} = \Tr$.
\end{itemize}

% At this point the reader may wonder why we don't directly use the $S_\T$-based decision procedure as our decision procedure for the well-founded semantics. There are two reasons: The decision procedure defined in section \ref{sec:query} is more efficient, and it is better at minimizing communication between the different agents' theories, which is a relevant feature for the access control application we have in mind.

In the definition of the $S_\T$-based decision procedure, we use a \emph{query graph} as defined in section \ref{sec:query}. %Unlike the query graph produced in the decision procedure from section \ref{sec:query}, the labelling of the query vertices in the query graph produced by the $S_\T$-based decision procedure are unique, i.e. there are no two distinct query vertex with the same labelling $k:\phi$.

% \begin{definition}
% \label{def:query graph}
% A \emph{query graph} is a labelled directed graph with two kinds of vertices and two kinds of edges:
%  \begin{itemize}
%   \item The first kind of vertices are the \emph{query vertices}. Each query vertex is labelled by a directed query of the form $k:\phi$, where $k$ is the principal whose theory is being queried and $\phi$ is the formula representing the query, as well as by a truth value in $\{\Tr,\Fa,\Un\}$, which represents the currently active valuation of the query at any moment during the execution of the decision procedure.
%   \item The second kind of vertices are the \emph{\say-literal set vertices}. Each \say-literal set vertex is labelled by a set of \say-literals, i.e.\ formulas of the form $k \says \phi$ or $\neg k \says \phi$.
%   \item The first kind of edges are unlabelled edges going from a query vertex to a \say-literal set vertex. The intended meaning of such an unlabelled edge from $k:\Q$ to the \say-literal set $L$ is that one way of making $\Q$ true in $k$'s theory is to make all \say-literals in $L$ true.
%   \item The second kind of edged are edges labelled by $\Tr$ or $\Fa$, goinf from a \say-literal set vertex to a query vertex. The intended meaning of such an edge labelled by $\Tr$ or $\Fa$ and going from the \say-literal set $L$ to the query $k:\Q$ is that $L$ contains the literal $k \says \Q$ or the literal $\neg k \says \Q$ respectively.
%  \end{itemize}
% \end{definition}

There is a direct correspondence between distributed belief pairs and certain truth-value labelling of the query vertices in a query graph:
\begin{definition}
 Let $G$ be a query graph. Let \ubp be a distributed belief pair. We say that the truth-value labelling of the query vertices of $G$ \emph{corresponds to} \ubp iff for each query vertex $k:\phi$ in $G$, the truth-value with which this vertex is labelled is $(k \says \phi)^\ubp$.
\end{definition}

Note that there are truth-labellings of the query vertices that do not correspond to any distributed belief pair. We call a truth-labelling of the query vertices \emph{good} iff it corresponds to some distributed belief pair.

The $S_\T$-based decision procedure works by first producing a query graph and then iteratively modifying the truth-value labelling of the query vertices. We need to ensure that after each iteration of this iterative modification, the truth-value labelling of the query vertices is good. However, there are intermediate steps within each iteration which lead to a bad labelling of the query vertices. In order to get back to a good labelling, we apply the changes defined by Algorithm \ref{alg:make good}.

\begin{algorithm}[h!t]
\caption{\textsf{Make labelling of query vertices good} \label{alg:make good}}
\begin{algorithmic}[1]
\REQUIRE{query graph $G$}
\ENSURE{modified query graph $G$}
\WHILE{there is a $\Un$-labelled query vertex $k:\phi$ in $G$ such that replacing \say-atoms in $\phi$ corresponding to $\Tr$- or $\Fa$-labelled query vertices by $\Tr$ and $\Fa$ respectively makes $\phi$ a tautology}
  \STATE{change the $\Un$-label in each such query vertex in $G$ by $\Tr$}
\ENDWHILE
\WHILE{there is a $\Fa$-labelled query vertex $k:\phi$ in $G$ such that replacing \say-atoms in $\phi$ corresponding to $\Un$-, $\Tr$- or $\Fa$-labelled query vertices by $\Tr$ or $\Fa$, $\Tr$ and $\Fa$ respectively makes $\phi$ a tautology}
  \STATE{change the $\Fa$-label in each such query vertex in $G$ by $\Un$}
\ENDWHILE
\RETURN{$G$}
\end{algorithmic}
\end{algorithm}

% It is easy to see that the output of Algorithm \ref{alg:make good} is the least precise (i.e.\ $\Un$-maximizing) good labelling of query vertices that is more precise than the input labelling.

In order to define the $S_\T$-based decision procedure, we furthermore need the following two definitions:

\begin{definition}
In a query graph, a \say-literal set vertex $L$ is defined to be \emph{satisfied} if for every $\Tr$-labelled edge from $L$ to a query vertex, the query vertex is labelled by $\Tr$, and for every $\Fa$-labelled edge from $L$ to a query vertex, the query vertex is labelled $\Fa$.
\end{definition}

% \begin{definition}
% In a query graph, a \say-literal set vertex $L$ is defined to be \emph{potentially satisfied} if for every $\Tr$-labelled edge from $L$ to a query vertex, the query vertex is labelled by $\Tr$ or $\Un$, and for every $\Fa$-labelled edge from $L$ to a query vertex, the query vertex is labelled $\Fa$ or $\Un$.
% \end{definition}

\begin{definition}
In a query graph, a \say-literal set vertex $L$ is defined to be \emph{dissatisfied} if either for some $\Tr$-labelled edge from $L$ to a query vertex, the query vertex is labelled by $\Fa$, or for some $\Fa$-labelled edge from $L$ to a query vertex, the query vertex is labelled $\Tr$.
\end{definition}

The definition of the $S_\T$-based decision procedure is given by the pseudo-code under Algorithm \ref{alg:ST-based}. 

\begin{algorithm}[h!t]
\caption{\textsf{$S_\T$-based decision procedure} \label{alg:ST-based}}
\begin{algorithmic}[1]
\REQUIRE{distributed theory $\T$, principal $A$, \DACL formula $\Q$}
\ENSURE{truth-value $V \in \{\Tr,\Fa,\Un\}$}
\STATE{$G$ := the empty graph }
\STATE{add a new query vertex $A:\Q$ to $G$}
\STATE{query\_{}stack := $\langle A:\Q \rangle$}
\WHILE{query\_{}stack $\neq \langle \rangle$}
  \STATE{$k:\phi$ := first element of query\_{}stack}
  \STATE{$\LL$ := \textsf{Query\_{}Minimization\_{}Procedure}($\T_k$,$\phi$)}
  \FOR{$L \in \LL$}
    \IF{$G$ does not contain a \say-literal set vertex $L$}
      \STATE{add a new \say-literal set vertex $L$ to $G$}
      \FOR{$l \in L$}
        \STATE{$k'$ := the principal such that $l$ is of the form $k' \says \psi$ or $\neg k' \says \psi$}
        \STATE{$\psi$ := the formula such that $l$ is of the form $k' \says \psi$ or $\neg k' \says \psi$}
        \IF{$G$ does not contain a query vertex $k':\psi$}
           \STATE{add a query vertex $k':\psi$ to $G$}
           \STATE{add $k':\psi$ to query\_{}stack}
        \ENDIF
        \IF{$l$ is $k' \says \psi$}
          \STATE{add to $G$ a new edge labelled $\Tr$ from vertex $L$ to vertex $k':\psi$}
        \ENDIF
        \IF{$l$ is $\neg k' \says \psi$}
          \STATE{add to $G$ a new edge labelled $\Fa$ from vertex $L$ to vertex $k':\psi$}
        \ENDIF
      \ENDFOR
    \ENDIF
    \STATE{add to $G$ a new edge from vertex $k:\phi$ to vertex $L$}
  \ENDFOR
\ENDWHILE
\STATE{add the label $\Un$ to all query vertices in $G$}
\STATE{finished := 0}
\WHILE{finished = 0}
  \STATE{$G_1 := G$}
  \STATE{change every $\Tr$-label on a query vertex in $G_1$ to $\Un$}
  \WHILE{in $G_1$ there is a query vertex labelled by $\Un$ with an edge to a satisfied \say-literal set vertex}
    \STATE{change every $\Un$-label on a query vertex with an edge to a satisfied \say-literal set vertex to $\Tr$}
    \STATE{$G$ := \textsf{Make\_{}labelling\_{}of\_{}query\_{}vertices\_{}good($G$)}}
  \ENDWHILE
  \STATE{$G_2 := G$}
  \STATE{change every $\Fa$-label on a query vertex in $G_2$ to $\Un$}
  \WHILE{in $G_2$ there is a query vertex labelled by $\Un$ with an edge to a dissatisfied \say-literal set vertex}
    \STATE{change every $\Un$-label on a query vertex with an edge to a dissatisfied \say-literal set vertex to $\Fa$}
    \STATE{$G$ := \textsf{Make\_{}labelling\_{}of\_{}query\_{}vertices\_{}good($G$)}}
  \ENDWHILE
  \STATE{in $G$, change the label on all query vertices that are labelled $\Un$ in $G$ and labelled $\Tr$ in $G_1$ into $\Tr$}
  \STATE{in $G$, change the label on all query vertices that are labelled $\Un$ in $G$ and labelled $\Fa$ in $G_2$ into $\Fa$}
  \IF{no changes were made to $G$ in the previous two lines}
    \STATE{finished := 1}
  \ENDIF
\ENDWHILE
\STATE{$V$ := the label on the query vertex $A:\Q$ in $G$}
\RETURN{$V$}
\end{algorithmic}
\end{algorithm}

%TODO after submission: Expand the following proof sketch a bit:
We now sketch the proof of the equivalence between the $S_\T$-based decision procedure and the decision procedure defined in section \ref{sec:query}: The only fundamental difference between these two decision procedures is the loop-handling. Step 2) of the $S_\T$-based decision procedure takes care of making queries looping over $\Tr$-labelled edges false. Queries looping over $\Fa$-labelled edges will always be left undecided by the $S_\T$-based decision procedure, which corresponds to making them undecided in the decision procedure defined in section \ref{sec:query}.

We now establish that the $S_\T$-based decision procedure always gives the same result as the well-founded semantics. Note that the labelling corresponding to the distributed belief pair $(\bot,\top)$ is the labelling in which all query vertices are labelled by $\Un$. Keeping in mind that the well-founded model can be computed by a finite number of application of $S_\T$ to $(\bot,\top)$, it is now easy to see that the following lemma is sufficient to establish that the $S_\T$-based decision procedure always gives the same result as the well-founded semantics:

\begin{lemma}
Let $\T$ be a distributed theory, $A$ be a principal and $\Q$ be a \DACL formula. Let $G$ be the query graph produced by lines 1-20 of Algorithm \ref{alg:ST-based} applied to $\T$, $A$ and $\Q$. Let \ubp be a distributed belief pair. Labelling the query vertices in $G$ according to \ubp and then applying lines 24 to 34 of Algorithm \ref{alg:ST-based} to $G$ yields a labelling of the queries corresponding to $S_\T(B)$.
\end{lemma}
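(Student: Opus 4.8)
Write $\ell_\ubp$ for the labelling of the query vertices of $G$ that corresponds to $\ubp$, so that each vertex $k:\varphi$ carries $(k \says \varphi)^\ubp$. By the general mechanics of Approximation Fixpoint Theory --- and under the convention, implicit in the appendix's use of $(S_\T)^n(\bot,\top)$, that $S_\T(\ubp^c,\ubp^l) := (S_\T(\ubp^l),S_\T(\ubp^c))$ --- applying $S_\T$ to $\ubp$ yields the belief pair whose conservative component is the $\leq_K$-least fixpoint of $\upws' \mapsto \ubprevision(\upws',\ubp^l)_1$ and whose liberal component is the $\leq_K$-least fixpoint of $\upws' \mapsto \ubprevision(\upws',\ubp^c)_1$; each bound is thus rebuilt by an internal least-fixpoint iteration that uses the \emph{other} old bound as its context. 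My plan is to match this two-sided fixpoint computation, step for step, against one pass through the body of the outer \textbf{while}-loop of Algorithm~\ref{alg:ST-based}: the block that resets and re-derives the $\Tr$-labels rebuilds the conservative component, the dual block that resets and re-derives the $\Fa$-labels rebuilds the liberal component, and the two merging lines re-instate exactly those old definite labels that survive alongside the freshly derived ones. Since the all-$\Un$ labelling corresponds to $(\bot,\top)$, and $\wfm(\T)$ is reached from $(\bot,\top)$ by finitely many applications of $S_\T$, this lemma is precisely the inductive step making the $S_\T$-based procedure agree with the well-founded semantics.

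\textbf{The key tool: local correctness.} The heart of the argument is a claim that the edges of $G$ around a query vertex faithfully encode the three-valued evaluation of the corresponding \say-atom. Precisely, I would show: if the query vertices of $G$ are labelled so as to correspond to some belief pair $\ubp'$, and $k:\varphi$ is a query vertex whose \say-literal set children are $L_1,\dots,L_m$ (with their $\Tr$- and $\Fa$-labelled edges to further query vertices), then $(k \says \varphi)^{\ubp'} = \Tr$ precisely when some $L_i$ is satisfied by the labelling, $(k \says \varphi)^{\ubp'} = \Fa$ precisely when every $L_i$ is dissatisfied, and $(k \says \varphi)^{\ubp'} = \Un$ otherwise. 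This is where Lemma~\ref{lem:minimization} is used: it identifies $L_1,\dots,L_m$ as exactly the $\subseteq$-minimal sets of \say-literals that make $\varphi$ true with respect to $\T_k$, and one then checks that, because the only features of $\ubp'$ that matter here are the truth values it assigns to the \say-atoms occurring in $G$ (read off the labelling), the value $(\T_k)^{\ubp',I}$ --- hence $(k \says \varphi)^{\ubp'}$ via Definitions~\ref{def:3valform} and~\ref{def:3valID} --- is determined by exactly which of these minimal forcing sets the labelling satisfies. I would also need the companion fact that $G$ is \emph{closed}: lines~1--20 create a query vertex for every \say-literal ever returned by a call to the Query Minimization Procedure and recurse on it, so every \say-atom that can influence the evaluation of the primary query under any belief pair occurs as a query vertex of $G$. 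This closure is what makes $S_\T(\ubp)$ act on the vertices of $G$ as a function of $\ell_\ubp$ alone, so that the target labelling corresponding to $S_\T(\ubp)$ is well defined.

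\textbf{Matching the two fixpoint iterations.} Granting local correctness, after the $\Tr$-labels are reset a query vertex acquires a new $\Tr$-label exactly when it has a \say-literal set child all of whose edges point to vertices already carrying the matching label (a new $\Tr$-label along a $\Tr$-labelled edge, the fixed old $\Fa$-label of the liberal side along an $\Fa$-labelled edge) --- which is precisely the monotone closure condition driving the inner $\Tr$-propagation \textbf{while}-loop. An induction on the number of iterations of that loop shows it stabilises at exactly the $\Tr$-labels prescribed by $S_\T(\ubp)$, the interleaved calls to Algorithm~\ref{alg:make good} merely upgrading to $\Tr$ those $\Un$-vertices whose formulas have already become tautologies under the definite labels (which, $\ubp'$ being a consistent belief pair, they genuinely must be). The $\Fa$-propagation block is treated dually, the second loop of Algorithm~\ref{alg:make good} this time downgrading to $\Un$ any provisional $\Fa$-label whose formula can still be completed to a tautology. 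Finally, the two merging lines keep the freshly derived definite labels and re-instate precisely the old definite labels compatible with them --- exactly how the two internal least fixpoints are assembled into $S_\T(\ubp)$. Collecting these facts, after one pass of the outer-loop body every vertex $k:\varphi$ carries $(k \says \varphi)^{S_\T(\ubp)}$; that is, the new labelling corresponds to $S_\T(\ubp)$.

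\textbf{Main obstacle.} I expect local correctness to be the hard part, and within it the claim that the $\subseteq$-minimal forcing sets of Lemma~\ref{lem:minimization} pin down the full \emph{three-valued} value of $k \says \varphi$ rather than merely two-valued entailment. I would have to argue carefully that undefined subqueries, and the inductive definitions sitting inside $\T_k$ --- whose contribution is mediated by $\wfm_{\Delta^{\ubp'}}$ as in Definition~\ref{def:3valID} --- interact with the forcing sets exactly as the ``some $L_i$ satisfied / all $L_i$ dissatisfied / neither'' trichotomy predicts, and that the family $\mathbb{S}_T$ over which the Query Minimization Procedure ranges is rich enough that no relevant \say-valuation is missed. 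Establishing closure of $G$, and hence the functionality on $G$ of the labelling-to-belief-pair correspondence, is the second point needing care, since otherwise the reference to $S_\T(\ubp)$ in the statement would be ambiguous. By contrast, the inductions matching the inner \textbf{while}-loops to the internal least fixpoints, and the verification that Algorithm~\ref{alg:make good} restores goodness without altering intended values, should be routine.
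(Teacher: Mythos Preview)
Your proposal is correct and follows essentially the same decomposition as the paper: reset the $\Tr$-labels (belief pair $\to (\bot,\upws_2)$), iterate the $\Tr$-propagation loop (apply $\ubprevision(\cdot,\cdot)_1$ to fixpoint), do the dual for $\Fa$-labels, then merge the two results. Your ``local correctness'' claim via Lemma~\ref{lem:minimization} is exactly what underlies the paper's properties~2 and~4, and you are in fact more explicit than the paper about the closure of $G$ and the three-valued subtleties around inductive definitions --- points the paper simply declares ``straightforward.''
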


\begin{proof}
For proving this lemma, it is enough to prove the following four properties, which can be proved straightforwardly:
\begin{enumerate}
 \item The change in the truth-value labelling of the query vertices of $G_1$ in line 25 of Algorithm \ref{alg:ST-based} corresponds to changing the belief pair $(\upws_1,\upws_2)$ to $\bot,\upws_2)$. 
 \item The change in the truth-value labelling of the query vertices of $G_1$ in lines 27-28 of Algorithm \ref{alg:ST-based} corresponds to changing the belief pair $(\upws_1,\upws_2)$ to $(D^*_\T(\upws_1,\upws_2)_1,\upws_2)$.
 \item The change in the truth-value labelling of the query vertices of $G_2$ in line 30 of Algorithm \ref{alg:ST-based} corresponds to changing the belief pair $(\upws_1,\upws_2)$ to $(\upws_1,\top)$.
 \item The change in the truth-value labelling of the query vertices of $G_2$ in lines 32-33 of Algorithm \ref{alg:ST-based} corresponds to changing the belief pair $(\upws_1,\upws_2)$ to $(\upws_1,D^*_\T(\upws_1,\upws_2)_2)$.
 \item Let $\upws_1,\upws_2,\upws_3,\upws_4$ be DPWS's such that $\upws_3 \leq_K \upws_1 \leq_K \upws_4 \leq_K \upws_2$. If the query vertices are labelled $\Tr$ in correspondence with the distributed belief pair $(\upws_1,\upws_2)$, labelled $\Fa$ in correspondence with the distributed belief pair $(\upws_3,\upws_4)$, and labelled $\Un$ otherwise, the resulting labelling corresponds to the distributed belief pair $(\upws_1,\upws_4)$.
\end{enumerate}

\end{proof}

This completes the proof of Theorem \ref{thm:correspondence}.

\end{document}